\newcommand{\argmin}{\operatornamewithlimits{argmin}}
\titlespacing*{\section}{0pt}{12pt}{3pt}
\titlespacing*{\subsection}{0pt}{12pt}{3pt}
\titleformat{\section}[hang]{\bfseries}{\thesection}{1em}{}[]
\titleformat*{\subsection}{\bfseries}
\titleformat*{\subsubsection}{\bfseries}
\titleformat*{\paragraph}{\bfseries}
\titleformat*{\subparagraph}{\bfseries}
\begin{document}

\newtheorem{definition}{Definition}
\newtheorem{coro}{Corollary}
\newtheorem{conj}{Conjecture}
\newtheorem{thm}{Theorem}
\newtheorem{prf}{Proof}
\newtheorem{lma}{Lemma}
\newtheorem{prp}{Proposition}
\renewcommand{\R}{\mathbb{R}}
\renewcommand{\E}{\mathbb{E}}
\renewcommand{\PP}{\mathbb{P}}
\newcommand{\I}{\mathbb{I}}
\newcommand{\lam}{\lambda}
\newcommand{\bs}{\boldsymbol}
\newcommand{\ph}{{\hat \pi}}
\newcommand{\xth}{\hat{x}_t}
\newcommand{\be}{\begin{equation}}
\newcommand{\ee}{\end{equation}}
\newcommand{\bea}{\begin{eqnarray}}
\newcommand{\eea}{\end{eqnarray}}
\newcommand{\bfl}{\begin{flalign}}
\newcommand{\efl}{\end{flalign}}
\newcommand{\bfc}{\begin{figure}\begin{center}}
\newcommand{\efc}{\end{center}\end{figure}}
\newcommand{\n}{\nonumber}
\newcommand{\dd}{\mathrm{d}}
\newcommand{\nin}{\noindent}
\newcommand{\mc}{\mathcal}
\newcommand{\ds}{\displaystyle}
\newcommand{\la}{\leftarrow}
\newcommand{\uga}{\underline{ga}}
\newcommand{\oga}{\overline{ga}}
\newcommand{\wt}{\widetilde}
\newcommand{\ch}{\mathrm{CH}}

\title{Randomized Minmax Regret for Combinatorial Optimization Under Uncertainty\footnote{Research supported in part by NASA ESTOs Advanced Information
System Technology (AIST) program under grant number NNX12H81G. Also supported by NSF grant 1029603 and ONR grant N00014-12-1-0033.}}
\author{Andrew Mastin\thanks{Laboratory for Information and Decision Systems, Department of Electrical Engineering and Computer Science, Massachusetts Institute of Technology, Cambridge, MA 02139, USA; {\tt mastin@mit.edu}},~~
Patrick Jaillet\thanks{Laboratory for Information and Decision Systems, Department of Electrical Engineering and Computer Science and Operations Research Department, Massachusetts Institute of Technology, Cambridge, MA 02139, USA; {\tt jaillet@mit.edu}},~~ Sang Chin\thanks{Draper Laboratory, 555 Technology Square, Cambridge, MA 02139; {\tt schin@draper.com}}
}
\maketitle
\begin{abstract}
The minmax regret problem for combinatorial optimization under uncertainty can be viewed as a zero-sum game played between an optimizing player and an adversary, where the optimizing player selects a solution and the adversary selects costs with the intention of maximizing the regret of the player. The existing minmax regret model considers only deterministic solutions/strategies, and minmax regret versions of most polynomial solvable problems are \NP-hard. In this paper, we consider a randomized model where the optimizing player selects a probability distribution (corresponding to a mixed strategy) over solutions and the adversary selects costs with knowledge of the player's distribution, but not its realization. We show that under this randomized model, the minmax regret version of any polynomial solvable combinatorial problem becomes polynomial solvable. This holds true for both the interval and discrete scenario representations of uncertainty. Using the randomized model, we show new proofs of existing approximation algorithms for the deterministic model based on primal-dual approaches. Finally, we prove that minmax regret problems are \NP-hard under general convex uncertainty.

~\\
~\\

\end{abstract}

\section{Introduction}
Many optimization applications involve cost coefficients that are not fully known. When distributional information on cost coefficients is available (e.g. from historical data or other estimates), stochastic programming is often an appropriate modeling choice \cite{birge97, shapiro09}. In other cases, costs may only be known to be contained in intervals (i.e. each cost has a known lower and upper bound), or to be a member of a finite set of scenarios, and one is more interested in worst-case performance. Robust optimization formulations are desirable here as they employ a minmax-type objective and do not require knowledge of cost distributions \cite{ky97, bertsim04, kasperski08}.

In a general robust optimization problem with cost uncertainty, one must select a set of items from some feasible \textit{solution set}, such that item costs are unknown but must be contained in a known \textit{uncertainty set}. Under the well known \textit{minmax} objective (also referred to as absolute robustness), the goal is to select a solution that gives the best upper bound on objective cost over all possible costs from the uncertainty set \cite{wald39}. That is, one must select the solution that, when item costs are chosen to maximize the cost of the selected solution, is minimum. Under the \textit{minmax regret} objective (sometimes called the robust deviation model), the goal is instead to select the solution that minimizes the maximum possible regret, defined as the difference between the cost of the selected solution and the optimal solution \cite{savage51}.

A problem under the minmax regret objective can be viewed as a two stage game. In the first stage, the optimizing player selects a deterministic solution. In the second stage, an adversary observes the selected solution and chooses costs from the uncertainty set with the intention of maximizing the player's regret. The goal of the optimizing player is thus to select a solution that least allows the adversary to generate regret. For both interval and discrete scenario representations of cost uncertainty, the minmax regret versions of most polynomial solvable problems are \NP-hard \cite{aissi09}. A variation on this model, first suggested by Bertsimas et al.~\cite{bert12} for minmax robust optimization, is to allow the optimizing player to select a probability distribution over solutions and require the adversary to select costs based only on knowledge of the players distribution, but not its realization. In this paper, we show that under this randomized model, the minmax regret version of any polynomial solvable 0-1 integer linear programming problem becomes polynomial solvable. This holds true for both the interval and discrete scenario representations of uncertainty. 

Our crucial observation is that the randomized model is the linear programming relaxation of the integer program for the deterministic model. This leads to some useful insights. First, the minmax expected regret in the randomized model is upper bounded by the minmax regret in the deterministic model. Next, the linear program formulation can be used to create an approximation algorithm for the deterministic problem. We show that existing approximation algorithms for deterministic minmax regret problems, which have been proved using combinatorial arguments, can in fact be derived using primal-dual methods \cite{aissi06, kasperski06}. Our analysis here leads to lower bounds on randomized minmax regret with respect to the deterministic minmax regret, effectively stating limits on the power of using randomization.

Given that the randomized model makes the minmax regret problem polynomial solvable for interval uncertainty and discrete scenario uncertainty, it is natural to ask if polynomial solvability remains in the presence of slightly more elaborate uncertainty sets. We show that for general convex uncertainty sets, however, that the mere maximum regret problem (rather than the full minmax regret problem) is \NP-hard. The deterministic and randomized minmax regret problems are at least as hard as the maximum regret problem, so these problems become \NP-hard under general convex uncertainty.

The paper is structured as follows. In the remainder of this section we review related work; Section \ref{defsec} introduces notation and definitions. Section \ref{discretesec} presents the analysis for discrete scenario uncertainty, with derivations of optimal strategies for the optimizing player and the adversary, as well as the primal-dual approximation algorithm. Section \ref{intervalsec} gives the same results for interval uncertainty. Section \ref{nphardsec} demonstrates \NP-hardness of minmax regret problems under general convex uncertainty.  A conclusion is given in Section \ref{conclusion}.

\subsection{Related Work}
One of the first studies of minmax regret from both an algorithmic and complexity perspective was that of Averbakh \cite{aver01}. He looked at the minmax regret version of the simple problem of selecting $k$ items out of $n$ total items where the cost of each item is uncertain, and the goal is to select the set of items with minimum total cost. For interval uncertainty, he derived a polynomial time algorithm based on interchange arguments. He demonstrated that for the discrete scenario representation of uncertainty, however, the minmax regret problem becomes \NP-hard, even for the case of only two scenarios. It is interesting to contrast these results with the case of general minmax regret linear programming, which as shown by Averbakh and Lebedev  \cite{aver05}, is \NP-hard for interval uncertainty but polynomial solvable for discrete scenario uncertainty.

Apart from the item selection problem, most polynomial solvable minmax regret combinatorial problems are \NP-hard, both for interval and discrete scenario uncertainty. This is true for the shortest path, minimum spanning tree, assignment, and minimum $s$-$t$ cut problems \cite{yu98, ky97, aver04, aissi05, aissi05cut}. One exception is the minimum cut problem, the minmax regret version of which is polynomial solvable both for interval and discrete scenario uncertainty \cite{aissi05cut}. The survey paper of Aissi et al.~\cite{aissi09} provides a comprehensive summary of results related to both minmax and minmax regret combinatorial problems. For problems that are already \NP-complete, most of their minmax regret versions are $\Sigma_2^p$-complete (meaning that they are at the second level of the polynomial hierarchy) \cite{johannes12}. To solve minmax regret problems in practice, the book by Kasperski reviews standard mixed integer program (MIP) formulations for both interval and discrete scenario uncertainty \cite{kasperski08}. General approximation algorithms are known for both types of uncertainty. Kasperski and Zieli\'nski \cite{kasperski06} proved a general $2$-approximation algorithm based on midpoint costs under interval uncertainty, and Aissi et al. \cite{aissi06} gave a $k$-approximation algorithm using average costs under discrete scenario uncertainty, where $k$ is the number of scenarios.

The application of a game theoretic model with mixed strategies to robust optimization problems was introduced by Bertsimas et al. \cite{bert12}. They focused on the minmax robust model, and their analysis was motivated by adversarial models used for online optimization algorithms. As described by Ben-David et al.~\cite{bendavid94} (see also Borodin and El-Yaniv \cite{be98}), the three types of adversaries are the \textit{oblivious adversary}, the \textit{adaptive online adversary}, and the \textit{adaptive offline adversary}. The adaptive offline adversary is the analog of the conventional deterministic minmax regret problem, while the adaptive online adversary corresponds to our randomized model. The analog of the oblivious adversary, which we do study, is the model where the adversary first selects costs, and the optimizing player then selects the solution after viewing these costs.

For the randomized (corresponding to the adaptive online adversary) minmax problem, Bertsimas et al.~\cite{bert12} showed that if it is possible to optimize over both the solution set and the uncertainty set in polynomial time, then an optimal mixed strategy solution can be calculated in polynomial time, and that the expected cost under the randomized model is no greater than the cost for the deterministic model. This holds despite the fact that solving the minmax version of many polynomial solvable problems is \NP-hard for the deterministic case \cite{bertsimtech04}. They also gave lower bounds on the improvement gained from randomization for various uncertainty sets. Our work is similar to theirs, but we focus on the minmax regret objective instead of the minmax objective.

Another line of research that is related to ours is in security applications, where the adversarial model is realistically motivated. Korzhyk, et al.~\cite{korzhyk10} considered assignment-type problems where defensive resources, such as security guards, must be assigned to valued targets. They followed a Stackelberg model where the defending player has the power to commit to a mixed strategy; the attacker then observes this mixed strategy (though not the realization) and decides which targets to attack. They used linear programming formulations along with the Birkhoff-von Neumann theorem to find polynomial-sized optimal mixed strategies. It is also worth mentioning the work of Bertismas et al. on randomized strategies for network interdiction \cite{bert12network}.

\section{Definitions}
\label{defsec}
We consider a general combinatorial optimization problem where we are given a set of $n$ items $E= \{e_1, e_2,\ldots,e_n\}$ and a set $\mc{F}$ of feasible subsets of $E$. Each item $e \in E$ has a cost $c_e \in \R$. Given the vector $c = (c_1,\ldots,c_n)$, the goal of the optimization problem is to select the feasible subset of items that minimizes the total cost; we refer to this as the \textit{nominal problem}:
\bea
F^*(c) := \min_{T \in \mc{F}} \sum_{e \in T} c_e.
\eea
Let $x= (x_1,\ldots,x_n)$ be a characteristic vector for some set $T$, so that $x_e = 1$ if $e \in T$ and $x_e = 0$ otherwise. Also let $\mc{X} \subseteq \{0,1\}^n$ denote the set of all characteristic vectors corresponding to feasible sets $T \in \mc{F}$. We assume that $\mc{X}$ is described in size $m$ (e.g. with $m$ linear inequalities). We can equivalently write the nominal problem with a linear objective function:
\bea
F^*(c) = \min_{x \in \mc{X}} \sum_{e \in E} c_e x_e.
\eea
Throughout the paper, we will use both set notation and characteristic vectors for ease of presentation.
 
We will review the conventional regret definitions for the deterministic minmax regret framework, and then present the analogous definitions for our randomized model. For some cost vector $c \in \mc{C}$, the deterministic cost of a solution $T \in \mc{F}$ is
\bea
F(T,c) := \sum_{e \in T} c_e.
\eea
The regret of a solution $T$ under some cost vector $c$ is the difference between the cost of the solution and the optimal cost:
\bea
R(T,c):= F(T,c) - F^*(c).
\eea 
The \textit{maximum regret problem} for a solution $T$ is 
\bea
R_{\max}(T) := \max_{c \in \mc{C}} R(T,c) = \max_{c \in \mc{C}} \left ( F(T,c) - F^*(c) \right).
\eea
The \textit{deterministic minmax regret problem} is then
\bea
Z_\mathrm{D} := \min_{T \in \mc{F}} R_{\max}(T) = \min_{T \in \mc{F}} \max_{c \in \mc{C}}(F(T,c) - F^*(c)). \label{minmaxd}
\eea
In the remainder of the paper, we will frequently abuse the notation $F(\cdot,c)$, $R(\cdot,c)$ and $R_{\max}(\cdot)$ by replacing set arguments with vectors (e.g. $F(x,c)$ in place of $F(T,c)$), but we will follow the convention of using capital letters for sets and lowercase letters for vectors.

We now move to the randomized framework, where the optimizing player selects a distribution over solutions and the adversary selects a distribution over costs. Starting with the optimizing player, for some set $T \in \mc{F}$, let $y_T$ denote the probability that the optimizing player selects set $T$. Let $y = (y_T)_{T \in \mc{F}}$ be the vector of length $|\mc{F}|$ specifying the set selection distribution; we will refer to $y$ simply as a solution. Define the feasible region for $y$ as 
\bea
\mc{Y} := \{y |y \ge \bs{0}, \bs{1}^\top y = 1\},
\eea
where the notation $\bs{0}$ and $\bs{1}$ indicates a full vector of zeros and ones, respectively. We similarly define a distribution over costs for the adversary. The set $\mc{C}$ may in general be infinite, but we will only consider strategies with finite support; for now we will assume that such strategies are sufficient. Thus consider a finite set $\mc{C}_f \subseteq \mc{C}$, and for some $c \in \mc{C}_f$, let $w_c$ denote the probability that the adversary selects costs $c$. Then let $w = (w_c)_{c \in \mc{C}_f}$ and define the feasible region
\bea
\mc{W} := \{w | w \ge \bs{0}, \bs{1}^\top w = 1\}.
\eea
The expected regret under $y$ and $w$ is simply
\bea
\overline{R}(y,w) := \sum_{T \in \mc{F}} \sum_{c \in \mc{C}_f} y_T w_c  R(T,c) = \sum_{T \in \mc{F}} \sum_{c \in \mc{C}_f} y_T w_c  (F(T,c)-F^*(c)).
\eea
For a given $y$, the \emph{maximum expected regret problem} is
\bea
\overline{R}_{\max}(y) &:=& \max_{w \in \mc{W}} \sum_{c \in \mc{C}_f} w_c \sum_{T \in \mc{F}} y_T R(T,c) \n \\
&=& \max_{c \in \mc{C}_f}\sum_{T \in \mc{F}} y_T R(T,c).
\eea
The above equality follows using the standard observation used in game theory: the optimization of $w \in \mc{W}$ is maximization of the function $G(y,c) = \sum_{T \in \mc{F}} y_T R(T,c)$ over the convex hull of $\mc{C}_f$, which is equivalent to optimizing over $\mc{C}_f$ itself. The minmax expected regret problem, which we refer to as the \textit{randomized minmax regret problem}, is
\bea
Z_\mathrm{R} := \min_{y \in \mc{Y}} \overline{R}_{\max}(y) = \min_{y \in \mc{Y}} \max_{c \in \mc{C}}\left(\sum_{T \in \mc{F}} y_T (F(T,c)-F^*(c)) \right),\label{minmaxer}
\eea
where we have replaced $\mc{C}_f$ with $\mc{C}$ under the assumption that $\mc{C}_f$ contains the maximizing cost vector.

The above minmax expected regret problem is the problem faced by the optimizing player; the adversary, however, is interested in solving the maxmin expected regret problem, defined as follows. First, the \emph{minimum expected regret problem} for a given $w$ is
\bea
\overline{R}_{\min}(w) &:=& \min_{y \in \mc{Y}} \sum_{T \in \mc{F}} y_T \sum_{c \in \mc{C}_f} w_c R(T,c) \n \\
&=& \min_{T \in \mc{F}} \sum_{c \in \mc{C}_f} w_c R(T,c),
\eea
where we have once again used the fact that optimizing over the convex hull of the set of solutions is equivalent to optimizing over the set of solutions. The \emph{adversarial randomized maxmin regret problem} is
\bea
Z_\mathrm{AR} := \max_{w \in \mc{W}} \overline{R}_{\min}(w)= \max_{w \in \mc{W}} \min_{T \in \mc{F}} \left( \sum_{c \in \mc{C}_f} w_c(F(T,c)-F^*(c)) \right).
\eea
It is often the case that the minmax value of the game is equal to the maxmin value; that is, $Z_\mathrm{R} = Z_\mathrm{AR}$. The Minimax Theorem states that this holds for two-person zero-sum games with a finite number of pure strategies \cite{neumann28}. In the following sections, we will show that this identity holds for discrete scenario uncertainty and interval uncertainty, following from linear programming duality.

\section{Discrete Scenario Uncertainty}
\label{discretesec}
Under discrete scenario uncertainty, we are given a finite set $\mc{S}$ of $|\mc{S}| = k$ scenarios. For each $S \in \mc{S}$, there exists a cost vector $c^S = (c_e^S)_{e \in E}$. The adversary's mixed strategy is a probability distribution over scenarios, so we are not concerned with complications arising from infinite sets. This section is divided into three parts; we first determine computation of the optimal randomized strategy for the optimizing player, followed by computation of the adversary's optimal strategy. Thereafter, we use the randomized model to devise a primal-dual approximation scheme for the deterministic minmax regret problem. We restate and clarify some notation in the context of discrete scenario uncertainty throughout our development.

\subsection{Optimizing Player}
We first make some observations regarding the deterministic minmax regret problem that will be helpful in making comparisons with the randomized model. Under discrete scenario uncertainty, the deterministic maximum regret problem is
\bea
R_{\max}(T) = \max_{S \in \mc{S}} R(T,c^S) = \max_{S \in \mc{S}} \left ( F(T,c^S) - F^*(c^S) \right).
\eea
The deterministic minmax regret problem is
\bea
Z_\mathrm{D} = \min_{T \in \mc{F}} R_{\max}(T) = \min_{T \in \mc{F}} \max_{S \in \mc{S}}(F(T,c^S) - F^*(c^S)).
\eea
\begin{lma}
The deterministic minmax regret problem with discrete scenario uncertainty is equivalent to the following integer program.
\begin{alignat}{3}
Z_\mathrm{D} ~=~ \min \quad &z && \label{lp2det}\\
\mathrm{s.t.} \quad & \sum_{e \in E} c^S_e x_e - F^*(c^S) \le z,\qquad  && \forall S \in \mc{S}, \n \\
& \n x \in \mc{X}.
\end{alignat}
\end{lma}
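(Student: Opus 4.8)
The plan is to use the standard epigraph reformulation that linearizes the minimization of a pointwise maximum over a finite index set. First I would rewrite the deterministic minmax regret problem in terms of characteristic vectors: since $F(T,c^S) = \sum_{e \in E} c^S_e x_e$ when $x$ is the characteristic vector of $T$, and since $\mc{X}$ is exactly the set of such vectors for $T \in \mc{F}$, we have
\[
Z_\mathrm{D} = \min_{x \in \mc{X}} \max_{S \in \mc{S}} g_S(x), \qquad g_S(x) := \sum_{e \in E} c^S_e x_e - F^*(c^S).
\]
The crucial structural fact to record at this stage is that each $F^*(c^S)$ is a fixed constant that does not depend on $x$, so every $g_S$ is an affine function of $x$ and every constraint $g_S(x) \le z$ is linear in the joint variable $(x,z)$; this is what keeps the reformulation a (linear) integer program rather than introducing bilinear terms.

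The key step is to show that for every fixed feasible $x$ the inner maximum equals the optimal value of the auxiliary one-variable program $\min\{z : g_S(x) \le z \ \forall S \in \mc{S}\}$. This holds because the constraints force $z \ge \max_{S \in \mc{S}} g_S(x)$, while the choice $z = \max_{S \in \mc{S}} g_S(x)$ is feasible and attains this lower bound; finiteness of $\mc{S}$ guarantees that the maximum is attained. Substituting this identity into the outer minimization over $x \in \mc{X}$ yields
\[
\min_{x \in \mc{X}} \max_{S \in \mc{S}} g_S(x) = \min_{x \in \mc{X}} \min\{z : g_S(x) \le z \ \forall S \in \mc{S}\} = \min\{z : g_S(x) \le z \ \forall S \in \mc{S},\ x \in \mc{X}\},
\]
which is precisely the integer program in the statement. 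Equivalence at the level of solutions follows the same way: pairing an optimal $x^*$ for $Z_\mathrm{D}$ with $z^* = \max_{S \in \mc{S}} g_S(x^*)$ gives an optimal point of the integer program, and conversely the $x$-component of any optimum of the integer program is an optimal deterministic solution.

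I do not expect a genuine obstacle here, since the argument is a routine epigraph transformation. The only points that require care are confirming that $F^*(c^S)$ enters as a constant, so that the constraints remain linear, and observing that replacing the maximum by the auxiliary variable $z$ is exact precisely because $\mc{S}$ is finite. For an infinite scenario set one would instead have a supremum, and attainment of the bound by a single $z$ would then need separate justification.
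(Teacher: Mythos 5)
Your proposal is correct and follows essentially the same route as the paper: rewrite $R_{\max}$ in vector notation as $\max_{S \in \mc{S}} \bigl( \sum_{e \in E} c_e^S x_e - F^*(c^S) \bigr)$ and then apply the standard epigraph reformulation, which the paper compresses into ``the integer program then follows by definition of the maximum.'' Your version simply makes explicit the points the paper leaves implicit (that each $F^*(c^S)$ is a constant, so the constraints are linear, and that finiteness of $\mc{S}$ makes the epigraph substitution exact).
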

\begin{proof}
Slightly abusing the notation for maximum regret, we have with vector notation
\bea
R_{\max}(x) = \max_{S \in \mc{S}} \left ( \sum_{e \in E} c_e^S x_e - F^*(c^S) \right).
\eea
The integer program then follows by definition of the maximum.
\end{proof}

For the randomized model, recall that the optimizing player's distribution over solutions is denoted by $y = (y_T)_{T \in \mc{F}}$ and that $\mc{Y}$ denotes the set of valid probability distributions. The maximum expected regret problem is
\bea
\overline{R}_{\max}(y) &=& \max_{S \in \mc{S}} \sum_{T \in \mc{F}} y_T R(T,c^S) \n \\
&=& \max_{S \in \mc{S}} \left (\sum_{T \in \mc{F}} y_T F(T,c^S) - F^*(c^S) \right).
\eea
We define the expected value of a solution for a distribution $y$ and cost vector $c^S$ to simplify notation:
\bea
\overline{F}(y,c^S) := \sum_{T \in \mc{F}} y_T F(T,c^S).
\eea
The maximum expected regret problem can then be stated as
\bea
\overline{R}_{\max}(y) = \max_{S \in \mc{S}}(\overline{F}(y,c^S) - F^*(c^S)).
\eea
The randomized minmax regret problem is
\bea
Z_\mathrm{R} = \min_{y \in \mc{Y}} \overline{R}_{\max}(y) = \min_{y \in \mc{Y}} \max_{S \in \mc{S}}(\overline{F}(y,c^S) - F^*(c^S)).
\eea

To solve the randomized minmax regret problem, it is possible to write a linear program analogous to the above integer program using variables $y_T$. This would, however, have $|\mc{F}|$ variables, which may grow exponentially in $n$. Instead, we note that for the maximum regret expected regret problem,
\begin{flalign}
\overline{R}_{\max}(y)&=\max_{S \in \mc{S}} \left (\sum_{T\in \mc{F}}y_T \sum_{e \in T} c_e^S - F^*(c^S) \right)\n \\
&=\max_{S \in \mc{S}} \left(\sum_{e \in E} c_e^S \sum_{T \in \mc{F}: e \in T} y_T - F^*(c^S) \right).
\end{flalign}
The change in summation order motivates the substitution
\bea
p_e := \sum_{T \in \mc{F} : e \in T} y_T, \quad e \in E. \label{subpe}
\eea
Let $p = (p_1,\ldots,p_n)$; we will refer to this as the marginal probability vector. The substitution is a mapping from $\mc{Y}$ to the convex hull of $\mc{X}$. The following is the minmax regret analog of an observation made by Bertsimas et al. \cite{bert12}.

\begin{lma}
For discrete scenario uncertainty, the objective value $Z_{\mathrm{R}}$ of the randomized minmax regret problem \eqref{minmaxer} is equal to that of the problem 
\bea
\min_{p \in \ch{(\mc{X})}} \max_{S \in \mc{S}}  \left(\sum_{e \in E} c_e^S p_e - F^*(c^S) \right),
\eea
where $\ch{(\mc{X})}$ denotes the convex hull of $\mc{X}$.
\label{pch}
\end{lma}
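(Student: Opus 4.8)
The plan is to exploit the observation, already recorded just before the statement, that the objective of the randomized problem depends on the distribution $y$ only through the marginal vector $p=(p_e)$ defined in \eqref{subpe}. Concretely, the chain of equalities
\[
\overline{F}(y,c^S)=\sum_{T \in \mc{F}} y_T \sum_{e \in T} c_e^S = \sum_{e \in E} c_e^S \sum_{T \in \mc{F}: e \in T} y_T = \sum_{e \in E} c_e^S p_e
\]
shows that $\overline{R}_{\max}(y)=\max_{S \in \mc{S}}\left(\sum_{e \in E} c_e^S p_e - F^*(c^S)\right)$, so the inner objective is a function of $p$ alone. It therefore suffices to prove that, as $y$ ranges over $\mc{Y}$, the induced marginal vector $p$ ranges over exactly $\mathrm{CH}(\mc{X})$; once this is established, minimizing over $y \in \mc{Y}$ is the same as minimizing the identical objective over $p \in \mathrm{CH}(\mc{X})$, and both attain $Z_\mathrm{R}$.

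I would establish this range equality in two directions. For the forward inclusion, note that the substitution can be written compactly as $p = \sum_{T \in \mc{F}} y_T\, x^T$, where $x^T \in \mc{X}$ is the characteristic vector of $T$; since $y \ge \bs{0}$ and $\bs{1}^\top y = 1$, this is a convex combination of points of $\mc{X}$, whence $p \in \mathrm{CH}(\mc{X})$. For the reverse inclusion, I use that $\mc{X} \subseteq \{0,1\}^n$ is finite and that $\mathrm{CH}(\mc{X})$ is by definition the set of all convex combinations of its finitely many points: any $p \in \mathrm{CH}(\mc{X})$ may be written as $p = \sum_i \lambda_i x^{T_i}$ with $\lambda_i \ge 0$, $\sum_i \lambda_i = 1$, and $x^{T_i} \in \mc{X}$, and assigning $y_{T_i} = \lambda_i$ (and $y_T = 0$ otherwise) produces a distribution $y \in \mc{Y}$ whose marginal vector is precisely $p$. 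This step relies on the one-to-one correspondence between feasible sets $T \in \mc{F}$ and characteristic vectors in $\mc{X}$.

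Combining the two directions, the map $y \mapsto p$ is a surjection of $\mc{Y}$ onto $\mathrm{CH}(\mc{X})$, and since $\overline{R}_{\max}(y)$ equals the stated objective evaluated at $p$, the two minimization problems have the same optimal value. I do not anticipate any serious obstacle: the only points requiring care are stating the convex-hull characterization cleanly (the convex hull of a finite set is exactly the set of convex combinations of its elements, so no appeal to Carath\'eodory's theorem is strictly needed) and verifying that the objective is genuinely invariant under the many-to-one passage from $y$ to $p$, so that nothing is lost in either direction of the minimization.
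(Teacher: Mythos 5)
Your proposal is correct and takes essentially the same route as the paper's proof: both reduce the lemma to the observation that the objective depends on $y$ only through the marginal vector $p$ defined in \eqref{subpe}, and that the map $y \mapsto p$ is a surjection of $\mc{Y}$ onto $\ch(\mc{X})$. The only difference is in justifying surjectivity: the paper invokes Carath\'eodory's Theorem, whereas you correctly observe that since $\mc{X}$ is finite its convex hull is, by definition, exactly the set of convex combinations of its elements, so Carath\'eodory is not strictly needed for the value equality (its extra content---a representation using at most $n+1$ points---is what the paper relies on later when arguing that a polynomial-sized mixed strategy encoding exists).
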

\begin{proof}
We use the same arguments presented in \cite{bert12}. By definition of the substitution \eqref{subpe}, the vector $p$ must lie in the convex hull of $\mc{X}$. Carath\'eodory's Theorem  \cite{cara11} states that any $p \in \ch(\mc{X})$ can be represented by a convex combination of at most $n+1$ points in $\mc{X}$, so there exists a surjective mapping from $\mc{Y}$ to $\ch(\mc{X})$.
\end{proof}

Since we will use the simplified formulation given in Lemma \ref{pch} to solve the randomized minmax regret problem, we address the problem of recovering a vector $y$ given a solution $p$. In the proof of the lemma, we have used Carath\'eodory's Theorem, which proves existence of such a mapping, but not its construction. To this end, we define for the optimizing player a \textit{mixed strategy encoding} $\mc{M} = (X,Y)$ as a set of deterministic solutions $X = \{x^{T_i} \in \mc{X},~i = 1,\ldots, \mu \}$ that should be selected with nonzero probability and the corresponding probabilities $Y = \{y_{T_i} \in [0,1],~i = 1,\ldots, \mu \}$ that satisfy $\sum_{i=1}^\mu y_{T_i} = 1$. Here $\mu$ is the support size of the mixed strategy (i.e. the number of deterministic solutions with nonzero probablity). For a given vector $p$, we are interested in solving the following constraint satisfaction program:
\begin{alignat}{3}
\min \quad &0 && \label{constrsat}\\
\mathrm{s.t.} \quad & \sum_{T \in \mc{F}:e \in T} y_T = p_e,\qquad  && \forall e \in E, \n \\
& \sum_{T \in \mc{F}} y_T = 1, \n \\
& y \ge \bs{0}. \n
\end{alignat}
Consider the dual program of \eqref{constrsat}, which has variables $u = (u_1,\ldots,u_e)$ and $w$:
\begin{alignat}{3}
\max \quad &w-\sum_{e \in E} p_e u_e && \label{constrsatdual} \\
\mathrm{s.t.} \quad &w- \sum_{e \in T} u_e \le 0,\qquad  && \forall T \in \mc{F}, \label{cc} \\
& u,w\mathrm{~free}. \n
\end{alignat}
Recall that the region $\mc{X}$ is described in size $m$.
\begin{lma}
\label{recover}
For any given $p \in \ch{(\mc{X})}$, a corresponding \textit{mixed strategy encoding} $\mc{M}$ of size polynomial in $n$ can be found via the linear programming formulation \eqref{constrsatdual} - \eqref{cc}. Furthermore, if the nominal problem $F^*(c)$ can be solved in time polynomial in $n$ and $m$, then $\mc{M}$ can be found in time polynomial in $n$ and $m$.
\end{lma}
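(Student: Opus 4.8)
The plan is to solve the exponentially-large feasibility program \eqref{constrsat} indirectly, through its polynomially-described dual \eqref{constrsatdual}--\eqref{cc}, using the equivalence between separation and optimization furnished by the ellipsoid method. The first observation is that the primal \eqref{constrsat} is feasible precisely because $p \in \ch(\mc{X})$: any expression of $p$ as a convex combination of points of $\mc{X}$ already exhibits a feasible $y$. Its optimal value is therefore $0$, and strong duality gives that \eqref{constrsatdual}--\eqref{cc} also has optimal value $0$, attained at $u = \bs{0}$, $w = 0$.

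The key structural point is that the nominal problem is exactly a separation oracle for the dual. A candidate $(u,w)$ violates some constraint \eqref{cc} iff $w - \sum_{e \in T} u_e > 0$ for some $T \in \mc{F}$, i.e.\ iff $w > \min_{T \in \mc{F}} \sum_{e \in T} u_e = F^*(u)$. Hence one evaluation of $F^*(u)$ both decides feasibility of $(u,w)$ and, when $F^*(u) < w$, returns a minimizing set $T$ that furnishes a violated constraint. By hypothesis each such call costs time polynomial in $n$ and $m$.

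I would then run the ellipsoid method on \eqref{constrsatdual}--\eqref{cc} with this oracle, after confining $u$ and $w$ to a box of polynomially bounded encoding length (legitimate since $\mc{X} \subseteq \{0,1\}^n$, so an optimal $u$ may be taken bounded). The method terminates in polynomially many iterations and, crucially, queries the oracle only polynomially many times; the violated constraints it encounters correspond to a family $X = \{x^{T_1}, \dots, x^{T_\mu}\}$ with $\mu$ polynomial in $n$ and $m$. The restricted dual obtained by retaining only the constraints indexed by $T_1, \dots, T_\mu$ still has optimal value $0$, because the ellipsoid run certified optimality of the full dual using exactly these constraints. Finally I would solve the restricted primal: program \eqref{constrsat} with $y$ supported on $\{T_1, \dots, T_\mu\}$. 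Since the restricted dual is bounded above by $0$, LP duality for this polynomial-size pair makes the restricted primal feasible, and passing to a basic feasible solution leaves at most $n+1$ nonzero entries, matching the Carath\'eodory bound behind Lemma \ref{pch} and yielding probabilities $Y$ and the encoding $\mc{M} = (X, Y)$. The total cost is polynomially many calls to $F^*$ plus one polynomial-size LP, hence polynomial in $n$ and $m$.

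The step I expect to be the main obstacle is the claim that the restricted dual retains optimal value $0$: this is the real content of the separation--optimization equivalence, and making it rigorous requires care about what the ellipsoid method actually certifies, together with verifying the boundedness and rationality conditions needed to invoke the method with this oracle. Everything else is either definitional or a routine appeal to LP duality.
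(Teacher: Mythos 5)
Your proposal is correct and follows essentially the same route as the paper's proof: exploit that the dual \eqref{constrsatdual}--\eqref{cc} has polynomially many variables, that the nominal problem $F^*(u)$ is exactly the separation oracle for \eqref{cc}, and that the ellipsoid run generates only polynomially many constraints, whose associated dual variables yield the probabilities $y_{T_i}$ of the encoding $\mc{M}$. Your treatment is in fact more careful than the paper's at the step you flag as the main obstacle (the restricted dual retaining optimal value $0$ and the basic-feasible-solution argument bounding the support), which the paper passes over with a brief appeal to row generation.
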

\begin{proof}

Notice that while the primal program has an exponential number of variables and a linear number of constraints, the opposite holds true for the dual. The primal program is bounded since all objective coefficients are equal to zero, and is feasible due to Carath\'{e}odory's Theorem. Therefore the dual program must be feasible and bounded.

To guarantee a polynomial sized solution, note that the separation problem for the constraints \eqref{cc} is simply the nominal problem with costs $u$, so the dual program can be solved via the ellipsoid method. If the nominal problem can be solved in polynomial time, then the constraints \eqref{cc} can be generated in polynomial time, giving a polynomial time solution for the entire dual program \eqref{constrsatdual} - \eqref{cc}.

From a practical perspective, a separation oracle for \eqref{cc} gives an efficient method for performing row generation with the simplex method. Each row $i$ generated while solving the dual problem gives a solution $x^{T_i} \in \mc{X}$, and its dual variable is the corresponding probability $y_{T_i}$.
\end{proof}

Using Lemma \ref{pch}, we can now formulate a linear program to solve the randomized minmax regret problem.
\begin{alignat}{3}
\min \quad &z && \label{lp2}\\
\mathrm{s.t.} \quad & \sum_{e \in E} c_e^S p_e - F^*(c^S) \le z,\qquad  && \forall S \in \mc{S}, \label{lconstr} \\
& p \in \ch{(\mc{X})} \label{lp2end}. 
\end{alignat}
This leads to the important result that the randomized minmax regret problem is polynomial solvable for any polynomial solvable nominal problem. Also, the minmax expected regret is upper bounded by the minmax regret in the deterministic case.
\begin{thm}
For discrete scenario uncertainty, if the nominal problem $F^*(c)$ can be solved in time polynomial in $n$ and $m$, then the corresponding randomized minmax regret problem \\ $\min_{y \in \mc{Y}} \max_{S \in \mc{S}}(\overline{F}(y,c^S) - F^*(c^S))$ can be solved in time polynomial in $n$, $m$, and $k$.
\end{thm}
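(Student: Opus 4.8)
The plan is to reduce the randomized minmax regret problem to the linear program \eqref{lp2}--\eqref{lp2end}, whose optimal value equals $Z_{\mathrm{R}}$ by Lemma \ref{pch}, and then argue that this linear program can be solved in polynomial time by the ellipsoid method. The argument splits into two phases: assembling the explicit data of the program, and then solving it through a separation oracle.

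First I would compute the scenario optima. The constraints \eqref{lconstr} require the constants $F^*(c^S)$ for every $S \in \mc{S}$. Each of these is an instance of the nominal problem, and by hypothesis the nominal problem is solvable in time polynomial in $n$ and $m$; hence all $k$ constants can be obtained with $k$ calls to the nominal solver, in total time polynomial in $n$, $m$, and $k$. With these constants fixed, the program \eqref{lp2}--\eqref{lp2end} has only $n+1$ decision variables $p_1,\dots,p_n$ and $z$.

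Next I would solve this linear program with the ellipsoid method, the only subtlety being that the region $p \in \ch(\mc{X})$ may require exponentially many inequalities. I would construct a single separation oracle for the full feasible region as follows. Given a candidate $(p,z)$, first test the $k$ inequalities \eqref{lconstr} directly; if one is violated, return it as a separating hyperplane. Otherwise test membership $p \in \ch(\mc{X})$. Here I would invoke the polynomial-time equivalence of separation and optimization of Gr\"otschel, Lov\'asz, and Schrijver: optimizing a linear objective over $\ch(\mc{X})$ is identical to optimizing it over $\mc{X}$, which is precisely the nominal problem $F^*(\cdot)$; since that is polynomial solvable, separation over $\ch(\mc{X})$ is also polynomial solvable and returns a valid cutting plane whenever $p \notin \ch(\mc{X})$. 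The containment $\ch(\mc{X}) \subseteq [0,1]^n$ and the bounded encoding length of its $0$-$1$ vertices supply the boundedness and well-description conditions the ellipsoid method requires. Thus the combined oracle runs in time polynomial in $n$, $m$, and $k$, the ellipsoid method returns an optimal $(p^*,z^*)$ in polynomial time, and $z^* = Z_{\mathrm{R}}$ by Lemma \ref{pch}.

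Finally, to output an actual randomized strategy rather than only its value, I would feed the optimal $p^*$ into Lemma \ref{recover}, producing a mixed strategy encoding $\mc{M}$ of polynomial support size in time polynomial in $n$ and $m$. The main obstacle is exactly the implicit, exponentially-faceted description of $\ch(\mc{X})$: everything hinges on converting the assumed polynomial solvability of the nominal problem into a polynomial-time separation oracle via the separation--optimization equivalence, after which the ellipsoid method and Lemma \ref{recover} finish the job.
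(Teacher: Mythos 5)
Your proposal is correct and follows essentially the same route as the paper's proof: compute the $k$ constants $F^*(c^S)$ by direct calls to the nominal solver, then solve the linear program \eqref{lp2}--\eqref{lp2end} by the ellipsoid method, using the Gr\"otschel--Lov\'asz--Schrijver equivalence of optimization and separation to separate over $\ch(\mc{X})$. Your added details (the explicit combined oracle, the well-description conditions, and the final recovery of $y$ via Lemma \ref{recover}) merely flesh out what the paper leaves implicit.
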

\begin{proof}
Since for all $S \in \mc{S}$, the value $F^*(c^S)$ is polynomial solvable, each constraint \eqref{lconstr} can be enumerated in polynomial time. 
If we can optimize over $\mc{X}$ in polynomial time, then we can separate over $\ch{(\mc{X})}$ in polynomial time via the result of \cite{grot81}. This gives the separation oracle for \eqref{lp2end}.
\end{proof}
\begin{coro} 
For discrete scenario uncertainty, $Z_\mathrm{R} \le Z_\mathrm{D}.$
\end{coro}
\begin{proof}
The program \eqref{lp2} - \eqref{lp2end} is the linear programming relaxation of \eqref{lp2det}.
\end{proof}

\subsection{Adversary}
Moving to the perspective of the adversary under discrete scenario uncertainty, the adversary must select a mixed strategy over scenarios. The finite number of scenarios naturally requires the adversary's distribution to have finite support.  Specifically, the adversary selects a distribution over costs $w = (w_S)_{S \in \mc{S}}$. The minimum expected regret problem for a given $w$ is
\bea
\overline{R}_{\min}(w) &=& \min_{T \in \mc{F}} \sum_{S \in \mc{S}} w_S R(T,c^S) \n \\
&=& \min_{T \in \mc{F}} \sum_{S \in \mc{S}} w_S\left(F(T,c^S)-F^*(c^S)\right).
\eea
Recall that $\mc{W}$ indicates valid probability distributions for $w$. The adversarial randomized maxmin regret problem is
\bea
Z_{\mathrm{AR}} = \max_{w \in \mc{W}} \overline{R}_{\min}(w) = \max_{w \in \mc{W}} \min_{T \in \mc{F}}  \sum_{S \in \mc{S}} w_S\left(F(T,c^S)-F^*(c^S)\right).
\eea

From the above definition, we formulate a linear program to solve the adversarial randomized maxmin regret problem:
\begin{alignat}{3}
\max \quad &z && \label{adv_lp_d1}\\
\mathrm{s.t.} \quad & \sum_{S \in \mc{S}} w_S (F(T,c^S) - F^*(c^S)) \ge z,\qquad  && \forall T \in \mc{F}, \label{advx} \\
& w \in \mc{W} \label{adv_lp_d3}.
\end{alignat}
The linear program has an exponential number of constraints, but the nominal problem gives a separation oracle.
\begin{thm}
For discrete scenario uncertainty, if the nominal problem $F^*(c)$ can be solved in time polynomial in $n$ and $m$, then the corresponding randomized adversarial maxmin regret problem \\ $\max_{w \in \mc{W}} \min_{T \in \mc{F}}  \sum_{S \in \mc{S}} w_S\left(F(T,c^S)-F^*(c^S)\right)$ can be solved in time polynomial in $n$, $m$, and $k$.
\end{thm}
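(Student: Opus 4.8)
The plan is to solve the linear program \eqref{adv_lp_d1}--\eqref{adv_lp_d3} with the ellipsoid method. This program has only $k+1$ variables (the distribution $w$ together with the scalar $z$) but an exponential number of constraints, one per feasible set $T \in \mc{F}$. The entire argument therefore reduces to exhibiting a separation oracle for the constraint family \eqref{advx} that runs in time polynomial in $n$, $m$, and $k$; given such an oracle, the equivalence of separation and optimization \cite{grot81} yields a polynomial-time algorithm for the whole program.

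First I would precompute, once, the $k$ nominal optima $F^*(c^S)$ for all $S \in \mc{S}$; by hypothesis each is obtained in time polynomial in $n$ and $m$, so this preprocessing is polynomial. Given a candidate point $(w,z)$, the oracle must decide whether $\sum_{S \in \mc{S}} w_S (F(T,c^S) - F^*(c^S)) \ge z$ holds for every $T \in \mc{F}$. The key observation is that the only $T$-dependent term is $\sum_{S} w_S F(T,c^S)$, and since each $F(\cdot,c^S)$ is linear we may interchange the summations:
\be
\sum_{S \in \mc{S}} w_S F(T,c^S) = \sum_{S \in \mc{S}} w_S \sum_{e \in T} c_e^S = \sum_{e \in T} \bar{c}_e, \qquad \bar{c}_e := \sum_{S \in \mc{S}} w_S c_e^S.
\ee
Hence all constraints \eqref{advx} hold simultaneously if and only if $\min_{T \in \mc{F}} \sum_{e \in T} \bar{c}_e \ge z + \sum_{S \in \mc{S}} w_S F^*(c^S)$, and that minimum is precisely the nominal problem $F^*(\bar{c})$ evaluated at the averaged cost vector $\bar{c}$.

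The separation oracle is therefore immediate: form $\bar{c}$ (polynomial in $n$ and $k$), solve the nominal problem $F^*(\bar{c})$ to obtain a minimizer $T^*$ (polynomial in $n$ and $m$ by hypothesis), and compare $F^*(\bar{c})$ against $z + \sum_{S \in \mc{S}} w_S F^*(c^S)$. If the former is at least the latter, $(w,z)$ is feasible; otherwise $T^*$ supplies a violated inequality of the form \eqref{advx} to return to the ellipsoid method. Each oracle call runs in time polynomial in $n$, $m$, and $k$, so the ellipsoid method solves \eqref{adv_lp_d1}--\eqref{adv_lp_d3} in the claimed running time.

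The main obstacle I anticipate is recognizing and justifying the reduction of the separation problem to a single nominal optimization over the averaged costs $\bar{c}$; once the linearity of $F(\cdot,c^S)$ is exploited to collapse the scenario sum into $\bar{c}$, what remains is a routine invocation of the ellipsoid method, together with the standard verification that the bit-complexity of $w$, $z$, and the scenario costs gives a polynomially bounded encoding and bounding region.
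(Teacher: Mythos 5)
Your proposal is correct and follows essentially the same route as the paper's proof: both reduce separation over the constraints \eqref{advx} to a single nominal problem with the scenario-averaged costs $d_e = \sum_{S \in \mc{S}} w_S c_e^S$ (your $\bar{c}_e$), after precomputing the values $F^*(c^S)$ once at initialization, and then invoke the equivalence of separation and optimization to solve the linear program \eqref{adv_lp_d1}--\eqref{adv_lp_d3} in polynomial time. Your write-up is merely more explicit than the paper's about the ellipsoid method and bit-complexity bookkeeping.
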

\begin{proof}
The separation oracle for \eqref{advx} is given by the nominal problem. First, notice that $F^*(c^S)$ for $S \in \mc{S}$ can be computed once at initialization and then stored for easy computation of $\sum_{S \in \mc{S}} w_S F^*(c^S)$ for any $w$. Next, we have
\bea
\sum_{S \in \mc{S}} w_S F(T,c^S) = \sum_{S \in \mc{S}} w_S \sum_{e \in T} c_e^S  = \sum_{e \in T} \left( \sum_{S \in \mc{S}} w_S c_e^S \right).
\eea
This means that solving nominal problem with costs $d = (d_1,\ldots,d_n)$ where
\bea
d_e = \sum_{S \in \mc{S}} w_S c_e^S
\eea
and comparing the solution with $z$ and $\sum_{S \in \mc{S}} w_S F^*(c^S)$ gives the oracle.
\end{proof}

\begin{coro}
For discrete scenario uncertainty, $Z_{\mathrm{R}} = Z_{\mathrm{AR}}$.
\end{coro}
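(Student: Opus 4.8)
The plan is to exhibit the programs whose optimal values are $Z_\mathrm{R}$ and $Z_\mathrm{AR}$ as a primal--dual pair and invoke strong linear programming duality. A small subtlety is that the LP \eqref{lp2}--\eqref{lp2end} computing $Z_\mathrm{R}$ lives in the marginal variables $p \in \ch(\mc{X})$, whereas the adversary's LP \eqref{adv_lp_d1}--\eqref{adv_lp_d3} lives in $w \in \mc{W}$ with one constraint per solution $T \in \mc{F}$; in those two forms they are not literally transposes of one another. The cleanest pairing comes from returning to the original description \eqref{minmaxer} of $Z_\mathrm{R}$ in the solution-distribution variables $y$.

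First I would write $Z_\mathrm{R}$ as the explicit linear program
\begin{align*}
\min \quad & z \\
\mathrm{s.t.} \quad & \sum_{T \in \mc{F}} y_T R(T,c^S) \le z, \quad \forall S \in \mc{S}, \\
& \sum_{T \in \mc{F}} y_T = 1, \\
& y \ge \bs{0},
\end{align*}
with $z$ free. Its value is $Z_\mathrm{R}$, since for fixed $y$ the inner maximization of expected regret over $w \in \mc{W}$ reduces to a maximization over the scenarios $S \in \mc{S}$, exactly as recorded in the definition of $\overline{R}_{\max}(y)$.

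Next I would dualize, assigning a nonnegative multiplier $w_S$ to each scenario inequality and a free multiplier $\lam$ to the normalization constraint $\sum_T y_T = 1$. Because $z$ is free and appears with coefficient $1$ in the objective and in every scenario constraint, the dual forces $\sum_{S \in \mc{S}} w_S = 1$, i.e.\ $w \in \mc{W}$; because each $y_T \ge 0$, the dual constraint reads $\lam \le \sum_{S \in \mc{S}} w_S R(T,c^S)$ for every $T \in \mc{F}$; and the dual objective is to maximize $\lam$. Maximizing $\lam$ subject to these constraints is precisely $\max_{w \in \mc{W}} \min_{T \in \mc{F}} \sum_{S \in \mc{S}} w_S R(T,c^S)$, which is the program \eqref{adv_lp_d1}--\eqref{adv_lp_d3} defining $Z_\mathrm{AR}$.

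Finally, both programs are feasible---the uniform distributions over $\mc{F}$ and over $\mc{S}$ are feasible for the primal and dual, respectively---and bounded, since every $R(T,c^S)$ is finite and $\mc{F}, \mc{S}$ are finite. Strong duality then gives $Z_\mathrm{R} = Z_\mathrm{AR}$. I do not anticipate a real obstacle: because $\mc{F}$ and $\mc{S}$ are both finite, this is just the Minimax Theorem for the matrix game with payoff entries $R(T,c^S)$, and the only care required is the bookkeeping of sign conventions---free versus nonnegative variables and the orientation of each inequality---so that the dual emerges in exactly the form \eqref{adv_lp_d1}--\eqref{adv_lp_d3}. The exponentially many $y$-variables are immaterial to the equality of optimal values, which holds for any finite-dimensional LP.
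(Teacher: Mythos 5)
Your proposal is correct and takes essentially the same approach as the paper: the paper's proof asserts that, via the marginal-probability substitution \eqref{subpe}, the adversary's program \eqref{adv_lp_d1}--\eqref{adv_lp_d3} is the dual of the player's program \eqref{lp2}--\eqref{lp2end} and concludes by strong duality, and your dualization of the $y$-space formulation is precisely the verification of that assertion (since the substitution is what identifies the $p$-space LP with the $y$-space LP). The only difference is that you carry out explicitly the bookkeeping the paper leaves as ``it can be verified.''
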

\begin{proof}
Using the substitution of the marginal probability vector in \eqref{subpe}, it can be verified that the linear program solved by the adversary \eqref{adv_lp_d1} - \eqref{adv_lp_d3} is the dual of the program solved by the optimizing player \eqref{lp2} - \eqref{lp2end}. The result holds by strong duality.
\end{proof}

\subsection{Primal-Dual Approximation}
As noted in the above corollary, the linear program solved by the adversary \eqref{adv_lp_d1} - \eqref{adv_lp_d3}  is the dual of program solved by the optimizing player \eqref{lp2} - \eqref{lp2end}. These linear programs correspond to the relaxation of the deterministic minmax regret problem, and can thus be used to develop a primal-dual approximation scheme. We will refer to the program solved by the optimizing player as the primal linear program, and the problem solved by the adversary as the dual linear program.

We rewrite the dual program \eqref{adv_lp_d1} - \eqref{adv_lp_d3} as
\begin{alignat}{3}
\max \quad &z - \sum_{S \in \mc{S}} w_S F^*(c^S)&&\\
\mathrm{s.t.} \quad & \sum_{S \in \mc{S}} w_S F(T,c^S)\ge z,\qquad  && \forall T \in \mc{F}, \label{wft} \\
& w \in \mc{W}.
\end{alignat}
A simple feasible solution to this program is given first by setting $w_S = 1 / k$ for each $S \in \mc{S}$. Using the standard approach for primal-dual algorithms \cite{williamson11}, we start with a sufficiently small value of $z$ and increase it until a constraint becomes tight. The set corresponding to the tight solution is then added to the primal solution. The constraint \eqref{wft} can be written as
\bea
\sum_{S \in \mc{S}} w_S F(T,c^S) = \sum_{S \in \mc{S}} w_S \sum_{e \in T} c_e^S = \sum_{e \in T}\left( \frac{1}{k} \sum_{S \in \mc{S}} c_e^S \right).
\eea
The first constraint that becomes tight corresponds to the set $M$ that minimizes the mean costs over all scenarios,
\bea
M := \argmin_{T \in \mc{F}} \sum_{e \in T}\left( \frac{1}{k} \sum_{S \in \mc{S}} c_e^S \right).
\eea
The set $M$, which is complete primal feasible solution, is added to the primal problem. Additionally, we have a feasible solution to the adversarial (dual) linear program with objective value
\bea
\left(\frac{1}{k}\right) \sum_{S \in \mc{S}} \left( \sum_{e \in M} c_e^S - F^*(c^S) \right),
\eea
which is a lower bound for the optimal objective value $Z_{\mathrm{R}}$. Using the same observations made in \cite{aissi06}, this gives a $k$-approximation algorithm for the minmax regret problem. The result given by the primal-dual framework is stronger than the result proved in \cite{aissi06} since it bounds the value of the approximate solution within a factor $k$ of the linear programming relaxation value $Z_{\mathrm{R}}$, rather than the integer program value $Z_{\mathrm{D}}$.
\begin{thm}
For discrete scenario uncertainty, the solution to the nominal problem with mean costs is a $k$-approximation algorithm for the deterministic minmax regret problem.
\end{thm}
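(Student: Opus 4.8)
The plan is to show directly that $R_{\max}(M) \le k\, Z_{\mathrm{D}}$, where $M = \argmin_{T \in \mc{F}} \sum_{e \in T}\bigl(\frac{1}{k}\sum_{S \in \mc{S}} c_e^S\bigr)$ is the mean-cost solution. The two structural facts I would exploit are that regret is always nonnegative, so the maximum over scenarios is controlled by the corresponding sum, and that the mean regret of $M$ is a valid lower bound on $Z_{\mathrm{D}}$ via the dual feasible solution already exhibited in the primal-dual development.

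First, I would record that minimizing mean cost is the same as minimizing mean regret. Since $F^*(c^S)$ does not depend on $T$,
\[
M = \argmin_{T \in \mc{F}} \frac{1}{k}\sum_{S \in \mc{S}} F(T, c^S) = \argmin_{T \in \mc{F}} \frac{1}{k}\sum_{S \in \mc{S}} \bigl(F(T,c^S) - F^*(c^S)\bigr).
\]
Evaluating the mean regret at $M$ gives exactly the value $L := \frac{1}{k}\sum_{S \in \mc{S}}\bigl(\sum_{e \in M} c_e^S - F^*(c^S)\bigr)$, which is the dual objective produced by the uniform adversary strategy $w_S = 1/k$. Because $L$ is the objective of a feasible point of the dual (adversarial) program and the dual is a maximization, $L \le Z_{\mathrm{AR}} = Z_{\mathrm{R}} \le Z_{\mathrm{D}}$, where the final inequality is the corollary relating the relaxation to the integer program.

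Second, I would bound the maximum regret of $M$ by its total regret. Since $F(M, c^S) \ge F^*(c^S)$ for every scenario, each term $R(M, c^S) \ge 0$, and hence
\[
R_{\max}(M) = \max_{S \in \mc{S}} R(M, c^S) \le \sum_{S \in \mc{S}} R(M, c^S) = k\,L.
\]
Chaining the two displays yields $R_{\max}(M) \le k\,L \le k\,Z_{\mathrm{D}}$, the claimed $k$-approximation.

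I do not expect a genuine obstacle, as the content lies entirely in these two observations. The one delicate point is ensuring the guarantee is stated against the deterministic optimum $Z_{\mathrm{D}}$ rather than merely $Z_{\mathrm{R}}$: the nonnegativity argument alone gives $R_{\max}(M) \le k L$ and $L \le Z_{\mathrm{R}}$, so one must invoke $Z_{\mathrm{R}} \le Z_{\mathrm{D}}$ to close the gap to the deterministic problem. Alternatively, I could bypass duality and argue combinatorially that $L = \min_{T \in \mc{F}} \frac{1}{k}\sum_{S \in \mc{S}} R(T, c^S) \le \min_{T \in \mc{F}} \max_{S \in \mc{S}} R(T, c^S) = Z_{\mathrm{D}}$, since an average never exceeds a maximum; this recovers the bound of Aissi et al.\ without the linear-programming machinery.
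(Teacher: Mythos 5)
Your proposal is correct and follows essentially the same route as the paper: the uniform dual-feasible strategy $w_S = 1/k$ with the mean-cost minimizer $M$, the bound $\max_S R(M,c^S) \le \sum_S R(M,c^S)$ from nonnegativity of regret, and the relaxation inequality $Z_{\mathrm{R}} \le Z_{\mathrm{D}}$ are exactly the three links in the paper's chain $\frac{Z_{\mathrm{D}}}{k} \le \frac{1}{k}\max_S R(M,c^S) \le \frac{1}{k}\sum_S R(M,c^S) \le Z_{\mathrm{R}}$. Your closing remark that an average-versus-maximum argument recovers the bound without LP machinery is also accurate (it is the original argument of Aissi et al.), but your main proof and the paper's are the same primal-dual argument, merely presented in a different order.
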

\begin{proof}
Using the construction above for a lower bound on $Z_{\mathrm{R}}$, we have
\bea
\frac{Z_{\mathrm{D}}}{k} \le \left( \frac{1}{k} \right) \max_{S \in \mc{S}} \left( \sum_{e \in M} c_e^S - F^*(c^S) \right) \le \left(\frac{1}{k}\right) \sum_{S \in \mc{S}} \left( \sum_{e \in M} c_e^S - F^*(c^S) \right) \le Z_{\mathrm{R}}.
\eea
The first inequality follows by definition of the deterministic minmax regret, the second inequality by a simple identity between the sum of a set of values and the maximum, and the third inequality from the linear program.
\end{proof}
An interesting corollary is a tight bound on the power of randomization in the minmax regret problem. For any nominal problem, moving from a deterministic solution to a randomized solution allows the optimizing player to at most reduce the expected regret by a factor of $k$.
\begin{coro}
For discrete scenario uncertainty,
\bea
Z_{\mathrm{R}} \ge \frac{Z_{\mathrm{D}}}{k}.
\eea
\label{igk}
\end{coro}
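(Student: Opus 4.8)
The plan is to read the desired bound $Z_{\mathrm{D}}/k \le Z_{\mathrm{R}}$ directly off the inequality chain already assembled while proving the preceding $k$-approximation theorem; the corollary contains no new content beyond isolating its first and last terms. Concretely, I would exhibit a single feasible adversary strategy whose value lower-bounds $Z_{\mathrm{R}}$ and simultaneously dominates $Z_{\mathrm{D}}/k$.

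First I would invoke the earlier corollary $Z_{\mathrm{R}} = Z_{\mathrm{AR}}$, so that for every $w \in \mc{W}$ weak duality (or simply the definition of the maximizing $\overline{R}_{\min}$) gives $\overline{R}_{\min}(w) \le Z_{\mathrm{AR}} = Z_{\mathrm{R}}$. I would then plug in the uniform strategy $w_S = 1/k$. By the same computation used in the theorem, the minimizer of the resulting mean-cost nominal problem is the set $M$, and $\overline{R}_{\min}(w) = \frac{1}{k}\sum_{S \in \mc{S}}(F(M,c^S) - F^*(c^S)) = \frac{1}{k}\sum_{S \in \mc{S}} R(M,c^S)$, which is therefore a lower bound on $Z_{\mathrm{R}}$.

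Next I would bound $Z_{\mathrm{D}}$ from above using $M$ as a feasible deterministic candidate: $Z_{\mathrm{D}} = \min_{T \in \mc{F}} \max_{S \in \mc{S}} R(T,c^S) \le \max_{S \in \mc{S}} R(M,c^S)$. The one fact worth checking is that each regret $R(M,c^S) = F(M,c^S) - F^*(c^S)$ is nonnegative, since $F^*(c^S)$ is the minimum cost under scenario $S$; this makes the maximum of the $k$ regrets at most their sum, so $\max_{S \in \mc{S}} R(M,c^S) \le \sum_{S \in \mc{S}} R(M,c^S) = k\,\overline{R}_{\min}(w) \le k Z_{\mathrm{R}}$. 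Combining the two estimates yields $Z_{\mathrm{D}} \le k Z_{\mathrm{R}}$, i.e. the claimed $Z_{\mathrm{R}} \ge Z_{\mathrm{D}}/k$.

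I do not expect any real obstacle here: the substantive work (dual feasibility of the uniform strategy, identification of $M$, and the weak-duality inequality) has already been carried out in the theorem, and the corollary merely records the endpoints of that chain. The only step to state carefully is the nonnegativity of regret, which licenses the passage from the maximum to the sum over the $k$ scenarios.
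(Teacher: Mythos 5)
Your proposal is correct and is essentially the paper's own argument: the corollary is proved by the chain $\frac{Z_{\mathrm{D}}}{k} \le \frac{1}{k}\max_{S}R(M,c^S) \le \frac{1}{k}\sum_{S}R(M,c^S) \le Z_{\mathrm{R}}$ established in the $k$-approximation theorem, using the uniform adversary strategy, the mean-cost minimizer $M$, and nonnegativity of regret to pass from maximum to sum. Your only cosmetic deviation is routing the final inequality through $Z_{\mathrm{R}}=Z_{\mathrm{AR}}$ rather than citing LP weak duality directly, which is an equivalent justification.
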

\nin The corollary equivalently states that the \emph{integrality gap}, defined as the largest possible ratio of the optimal objective value of a program to its optimal linear programming relaxation, is equal to $k$. This holds independent of the nominal problem. We construct a tight example for the corollary using $n = k$ items, where the goal of the problem is simply to select the single item with lowest cost. For each item, there exists a scenario where the item has cost $c_e = 1$ and all other items have costs $c_e = 0$. The deterministic minmax regret is equal to $1$ for the problem. In the randomized problem, the optimizing player selects each item with probability $1/k$ and the adversary selects each scenario with probability $1/k$. The expected regret is equal to the probability that the optimizing player selects the same item that the adversary assigns unit cost to, which is equal to $1/k$.

\section{Interval Uncertainty}
\label{intervalsec}
In this section we assume that cost uncertainty is characterized by interval uncertainty, meaning that each item cost is independently contained within known lower and upper bounds:
\bea
c_e \in [c_e^-, c_e^+],\quad \forall e \in E.
\eea
Define the region
\bea
\mc{I} := \{c | c_e \in [c_e^-, c_e^+], e \in E\}.
\eea
The set $\mc{I}$ is in general infinite. Since we wish to use a mixed distribution over $\mc{I}$ with finite support, we loosely define the set $\mc{I}_f $ to be some subset $\mc{I}_f \subset \mc{I}$ with finite cardinality, over which a probability distribution will be defined. The exact construction of $\mc{I}_f$ will become clear during the analysis, but a sufficient example is the set of cost vectors where costs are set equal to their lower or upper bounds, $\mc{I}_f = \{c | c_e = c_e^-~\mathrm{or}~c_e = c_e^+ , e \in E\}$.

We proceed in the same way as the last section, studying the optimal policy for the optimizing player and then the adversary, followed by a primal-dual approximation algorithm for the deterministic problem. We restate notation and definitions throughout. 

\subsection{Optimizing Player}
Under interval uncertainty, we have the deterministic maximum regret problem
\bea
R_{\max}(T) = \max_{c \in \mc{I}} R(T,c) = \max_{c \in \mc{I}} \left ( F(T,c) - F^*(c) \right) \label{maxregd}
\eea
and the deterministic minmax regret problem
\bea
Z_\mathrm{D} = \min_{T \in \mc{F}} R_{\max}(T) = \min_{T \in \mc{F}} \max_{c \in \mc{I}}(F(T,c) - F^*(c)).
\eea
The deterministic minmax regret problem is well studied and can be solved with a mixed integer program \cite{kasperski08}. We use an unconventional formulation, which has an exponential number of constraints. We will ultimately show that the randomized minmax regret problem corresponds to the linear programming relaxation of this formulation.
\begin{lma}
For interval uncertainty, the deterministic minmax regret problem \eqref{minmaxd} is equivalent to the following integer program.
\begin{alignat}{3}
Z_\mathrm{D} ~=~ \min \quad &z && \label{lp1det}\\
\mathrm{s.t.} \quad & \sum_{e \in E \setminus T} c_e^+x_e-\sum_{e \in T} c_e^- \left( 1- x_e\right) \le z,\qquad  && \forall T \in \mc{F}, \n \\
& \n x \in \mc{X}.
\end{alignat}
\label{detexp}
\end{lma}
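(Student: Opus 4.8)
The plan is to reduce the lemma to a closed-form description of the maximum regret $R_{\max}(x)$ under interval uncertainty, after which the integer program falls out by the definition of the maximum, exactly as in the discrete-scenario Lemma \ref{detexp} above. The one extra ingredient compared with that lemma is the identity $F^*(c) = \min_{T \in \mc{F}} F(T,c)$, which lets me unfold the inner optimization over cost vectors into a joint optimization over a cost vector and an alternative solution $T$.

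First I would expand
\[
R_{\max}(x) = \max_{c \in \mc{I}} \bigl(F(x,c) - F^*(c)\bigr) = \max_{c \in \mc{I}} \max_{T \in \mc{F}} \bigl(F(x,c) - F(T,c)\bigr),
\]
and then exchange the two maxima (legitimate since each ranges over a fixed set, so the order is immaterial), giving $R_{\max}(x) = \max_{T \in \mc{F}} \max_{c \in \mc{I}} (F(x,c) - F(T,c))$. For a fixed pair $(x,T)$ the objective $F(x,c) - F(T,c) = \sum_{e \in E} (x_e - \I[e \in T]) c_e$ is linear in $c$ over the box $\mc{I}$, so the inner maximum is separable coordinate-wise and attained at the endpoints: the coefficient of $c_e$ equals $+1$ for $e$ in the player's set but not $T$, $-1$ for $e \in T$ but not the player's set, and $0$ otherwise. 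Setting each $c_e$ to its upper or lower bound according to this sign produces the pairwise worst-case value, which after collecting terms equals $\sum_{e \in E \setminus T} c_e^+ x_e - \sum_{e \in T} c_e^-(1-x_e)$; here I have used that $x_e = 1$ exactly on the player's set and $1 - x_e = 1$ exactly off it to convert the set-difference sums into the characteristic-vector form.

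This already gives $R_{\max}(x) = \max_{T \in \mc{F}} \bigl(\sum_{e \in E \setminus T} c_e^+ x_e - \sum_{e \in T} c_e^-(1-x_e)\bigr)$. To finish, I would write $Z_\mathrm{D} = \min_{x \in \mc{X}} R_{\max}(x)$ and introduce a scalar $z$ that upper-bounds the quantity inside the maximum: imposing the displayed inequality for every $T \in \mc{F}$ and minimizing $z$ over $x \in \mc{X}$ reproduces the stated integer program verbatim. This ``introduce $z$'' move is identical to the one used in the discrete-scenario lemma.

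I expect the only real content to lie in the coordinate-wise maximization: justifying that the inner optimization over $\mc{I}$ decouples and is achieved at the box endpoints, which relies on the linearity of $F(x,\cdot) - F(T,\cdot)$ together with the product structure of $\mc{I}$. Everything else is bookkeeping --- the max-exchange and the indicator rewriting --- and the final step matches the earlier lemma. A useful byproduct worth recording is that, since every pairwise maximizer sets costs to their lower or upper bounds, the finite support set may be taken as $\mc{I}_f = \{c \mid c_e = c_e^- \text{ or } c_e = c_e^+,\ e \in E\}$, justifying the choice anticipated in the text.
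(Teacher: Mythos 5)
Your proposal is correct and takes essentially the same route as the paper's proof: unfold $F^*(c) = \min_{T \in \mc{F}} F(T,c)$ to turn the inner minimization into a maximization over $T$, exchange the two maxima, and resolve the inner optimization over the box $\mc{I}$ at the endpoints (the paper rearranges into the $E \setminus T$ / $T$ form before setting costs to bounds, while you do the coordinate-wise sign analysis first, but this is the same calculation), finishing by the definition of the maximum. No gaps.
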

\begin{proof}
From the maximum regret definition \eqref{maxregd} and using vector notation instead of set notation,
\bea
R_{\max}(x) &=& \max_{c \in \mc{I}} \left ( F(x,c) - F^*(c) \right) \n \\
&=& \max_{c \in \mc{I}} \left ( \sum_{e \in E} c_e x_e - \min_{T \in \mc{F}} \sum_{e \in T} c_e \right) \n \\
&=& \max_{T \in \mc{F}} \max_{c \in \mc{I}} \left( \sum_{e \in E} c_e x_e - \sum_{e \in T} c_e \right) \n \\
&=& \max_{T \in \mc{F}} \max_{c \in \mc{I}} \left( \sum_{e \in E \setminus T} c_e x_e - \sum_{e \in T} c_e (1-x_e) \right) \n \\
&=& \max_{T \in \mc{F}} \left( \sum_{e \in E \setminus T} c_e^+ x_e - \sum_{e \in T} c_e^- (1-x_e) \right),
\eea
where in the third equality we have used that the expression $\sum_{e \in E} c_e x_e$ is not a function of $T$, and the last equality follows since $x_e \in \{0,1\}$. 
The program is then valid by the definition of the maximum.
\end{proof}

In the randomized model, the maximum expected regret is
\bea
\overline{R}_{\max}(y) &=& \max_{c \in \mc{I}} \sum_{T \in \mc{F}} y_T R(T,c) \n \\
&=& \max_{c \in \mc{I}} \left (\sum_{T \in \mc{F}} y_T F(T,c) - F^*(c) \right) \label{maxer}.
\eea
As with the discrete scenario uncertainty case, we define the expected value of a solution for a distribution $y$ and cost vector $c$,
\bea
\overline{F}(y,c) := \sum_{T \in \mc{F}} y_T F(T,c),
\eea
so the maximum expected regret can be stated as
\bea
\overline{R}_{\max}(y) = \max_{c \in \mc{I}}(\overline{F}(y,c) - F^*(c)).
\eea
The randomized minmax regret problem is thus
\bea
Z_\mathrm{R} = \min_{y \in \mc{Y}} \overline{R}_{\max}(y) = \min_{y \in \mc{Y}} \max_{c \in \mc{I}}(\overline{F}(y,c) - F^*(c)).
\eea

Starting with analysis of the maximum expected regret problem \eqref{maxer}, we use the same substitution that we used in the previous section. Specifically, we let
\bea
p_e = \sum_{U \in \mc{F}:e \in U} y_U, \quad e \in E,
\eea
and define the marginal probability vector $p = (p_1,\ldots,p_n)$. Slightly abusing notation, we write $\overline{R}_{\max}(p)$ in place of $\overline{R}_{\max}(y)$ via this substitution.
\begin{lma}
\label{lama}
For interval uncertainty, the maximum expected regret problem \eqref{maxer} is equivalent to the problem
\bea
\overline{R}_{\max}(p) = \max_{T \in \mc{F}} \left( \sum_{e \in E \setminus T} c_e^+ p_e -\sum_{e \in T} c_e^- \left( 1- p_e\right) \right).
\eea
\end{lma}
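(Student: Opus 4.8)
The plan is to mimic the proof of Lemma~\ref{detexp}, replacing the characteristic vector $x$ by the marginal probability vector $p$, and then to verify that the single step where the deterministic argument invokes $x_e \in \{0,1\}$ survives the relaxation to $p_e \in [0,1]$.

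First I would use the marginal substitution to express $\overline{F}(y,c)$ as a linear function of $p$. Since $\overline{F}(y,c) = \sum_{T \in \mc{F}} y_T \sum_{e \in T} c_e = \sum_{e \in E} c_e \sum_{T \in \mc{F}: e \in T} y_T$, the definition of $p$ gives $\overline{F}(y,c) = \sum_{e \in E} c_e p_e$. Substituting into \eqref{maxer} and writing $F^*(c) = \min_{T \in \mc{F}} \sum_{e \in T} c_e$ yields
\[
\overline{R}_{\max}(p) = \max_{c \in \mc{I}} \left( \sum_{e \in E} c_e p_e - \min_{T \in \mc{F}} \sum_{e \in T} c_e \right).
\]

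Next I would follow the deterministic chain of equalities essentially verbatim. Rewriting $-\min_{T}$ as $\max_{T}(-\cdot)$ and noting that $\sum_{e \in E} c_e p_e$ does not depend on $T$, the right-hand side becomes a nested maximum over $c$ and $T$; interchanging the two maxima (always valid for a joint maximization) and regrouping the summand as $\sum_{e \in E \setminus T} c_e p_e - \sum_{e \in T} c_e (1 - p_e)$ gives
\[
\overline{R}_{\max}(p) = \max_{T \in \mc{F}} \max_{c \in \mc{I}} \left( \sum_{e \in E \setminus T} c_e p_e - \sum_{e \in T} c_e (1 - p_e) \right).
\]

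The step I expect to carry the real content, and the only place where the relaxation could conceivably fail, is the innermost maximization over $c \in \mc{I}$. Because $\mc{I}$ is a box, this separates coordinatewise, and the optimal $c_e$ is fixed by the sign of its coefficient: for $e \in E \setminus T$ the coefficient is $p_e$, and for $e \in T$ it is $-(1 - p_e)$. The deterministic proof sets $c_e = c_e^+$ and $c_e = c_e^-$ respectively, justifying this by $x_e \in \{0,1\}$. The point I must confirm is that the identical choice remains optimal under the relaxation: since $p$ is a marginal probability vector we have $p_e \in [0,1]$, hence $p_e \ge 0$ and $1 - p_e \ge 0$, so the coefficients have exactly the same signs as in the integral case. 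Therefore $c_e = c_e^+$ for $e \in E \setminus T$ and $c_e = c_e^-$ for $e \in T$ are optimal, producing the claimed closed form. I would add the minor remark that, although the maximum is formally taken over the infinite set $\mc{I}$, the maximizer lies at a vertex of the box, i.e.\ in $\mc{I}_f$, so nothing is lost by restricting to finite support.
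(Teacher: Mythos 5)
Your proposal is correct and follows essentially the same route as the paper's own proof: substitute the marginal vector $p$ via the change of summation order, interchange the maximizations over $c \in \mc{I}$ and $T \in \mc{F}$, and use $p_e \in [0,1]$ to conclude that the box maximization is attained at $c_e = c_e^+$ for $e \in E \setminus T$ and $c_e = c_e^-$ for $e \in T$. Your closing remark that the maximizer lies at a vertex of the box (hence in $\mc{I}_f$) is a small but sound addition that the paper defers to its later discussion of the adversary's strategy.
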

\begin{proof} We start with \eqref{maxer} and use the substitution of $p$. The analysis is nearly identical to the proof of Lemma \ref{detexp}.
\begin{flalign}
\overline{R}_{\max}(y)&=  \max_{c \in \mc{I}}(\overline{F}(y,c) - F^*(c)) \n \\
&=\max_{c \in \mc{I}} \left (\sum_{U\in \mc{F}}y_U \sum_{e \in U} c_e - \min_{T \in \mc{F}} \left ( \sum_{e \in T} c_e \right) \right)\n \\
&=\max_{c \in \mc{I}} \left(\sum_{e \in E} c_e \sum_{U \in \mc{F}: e \in U} y_U - \min_{T \in \mc{F}} \left ( \sum_{e \in T} c_e \right) \right)\n \\
&=\max_{c \in \mc{I}} \left(\sum_{e \in E} c_e p_e - \min_{T \in \mc{F}} \left ( \sum_{e \in T} c_e \right) \right).
\end{flalign}
Now using the notation $\overline{R}_{\max}(p)$,
\begin{flalign}
\overline{R}_{\max}(p)&=  \max_{c \in \mc{I}} \left(\sum_{e \in E} c_e p_e - \min_{T \in \mc{F}} \left ( \sum_{e \in T} c_e \right) \right) \n \\
&=\max_{c \in \mc{I}} \max_{T \in \mc{F}} \left(\sum_{e \in E} c_e p_e -\sum_{e \in T} c_e  \right)\n \\
&= \max_{T \in \mc{F}} \max_{c \in \mc{I}} \left(\sum_{e \in E \setminus T} c_e p_e -\sum_{e \in T} c_e \left( 1- p_e\right)  \right),\label{pause}
\end{flalign}
where in the first equality we have used that the expression $\sum_{e \in E} c_e p_e$ is not a function of $T$, and the other equalities follow from rearranging terms. Notice in \eqref{pause} that $p_e$ is simply the total probability that item $e$ is selected, so for $y \in \mc{Y}$, we must have $p_e \in [0,1]$. This makes it easy to see that for a given $T \in \mc{F}$,
\bea
\max_{c \in \mc{I}} \left(\sum_{e \in E \setminus T} c_e p_e -\sum_{e \in T} c_e \left( 1- p_e\right)  \right) =\sum_{e \in E \setminus T} c_e^+ p_e -\sum_{e \in T} c_e^- \left( 1- p_e\right)  \label{pause2}.
\eea
Substituting \eqref{pause2} into \eqref{pause} then gives an optimization problem with a finite number of feasible solutions,
\be
\overline{R}_{\max}(p) =\max_{T \in \mc{F}} \left ( \sum_{e \in E \setminus T} c_e^+ p_e  -\sum_{e \in T} c_e^- \left( 1- p_e \right) \right),
\label{endproof}
\ee
which completes the proof.
\end{proof}

An immediate corollary of Lemma \ref{lama} is that we can solve the maximum expected regret problem for a given $y$ by enumerating all $|\mc{F}|$ subsets (potentially an exponential number of them) and choosing the one that maximizes the argument of \eqref{endproof}. This allows the entire randomized minmax regret problem to be restated.
\begin{lma}
For interval uncertainty, the objective value $Z_\mathrm{R}$ of the randomized minmax regret problem \eqref{minmaxer} is equal to that of the problem
\bea
\min_{p \in \ch(\mc{X})} \max_{T \in \mc{F}} \left ( \sum_{e \in E \setminus T} c_e^+p_e-\sum_{e \in T} c_e^- \left( 1- p_e\right) \right), \label{motivlemma}
\eea
where $\ch(\mc{X})$ denotes the convex hull of $\mc{X}$.
\label{pch2}
\end{lma}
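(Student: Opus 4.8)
The plan is to reduce this statement to the already-established Lemma \ref{lama} together with a surjectivity argument identical in spirit to the one used for the discrete case in Lemma \ref{pch}. The central object is the substitution map $\Phi : \mc{Y} \to \R^n$ defined by $\Phi(y)_e = \sum_{U \in \mc{F} : e \in U} y_U$. First I would observe that this is exactly $\Phi(y) = \sum_{T \in \mc{F}} y_T x^T$, a convex combination of the characteristic vectors $x^T \in \mc{X}$ weighted by the distribution $y$. Hence $\Phi(y) \in \ch(\mc{X})$ for every $y \in \mc{Y}$, so $\Phi$ maps $\mc{Y}$ \emph{into} $\ch(\mc{X})$.

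Next I would establish that $\Phi$ is \emph{onto} $\ch(\mc{X})$. Any $p \in \ch(\mc{X})$ is by definition a convex combination of points of $\mc{X}$, and by Carath\'eodory's Theorem it can be written using at most $n+1$ of them; the associated convex weights define a distribution $y \in \mc{Y}$ supported on those sets, with $\Phi(y) = p$. This gives $\Phi(\mc{Y}) = \ch(\mc{X})$ exactly. With both inclusions in hand, the conclusion follows from Lemma \ref{lama}, which shows that $\overline{R}_{\max}(y)$ depends on $y$ only through $p = \Phi(y)$ and equals the finite maximization over $\mc{F}$. I would then chain
\[
Z_\mathrm{R} = \min_{y \in \mc{Y}} \overline{R}_{\max}(y) = \min_{p \in \Phi(\mc{Y})} \overline{R}_{\max}(p) = \min_{p \in \ch(\mc{X})} \max_{T \in \mc{F}} \left( \sum_{e \in E \setminus T} c_e^+ p_e - \sum_{e \in T} c_e^-(1-p_e) \right),
\]
where the middle equality uses that $\overline{R}_{\max}$ factors through $\Phi$, and the last equality substitutes $\Phi(\mc{Y}) = \ch(\mc{X})$ and invokes the expression from Lemma \ref{lama}.

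I do not expect a serious obstacle here: the argument mirrors Lemma \ref{pch} almost verbatim, since Lemma \ref{lama} has already done the substantive work of collapsing the inner maximization over the infinite set $\mc{I}$ down to a finite maximization over $\mc{F}$. The only point genuinely requiring care is verifying that $\Phi$ is simultaneously \emph{into} and \emph{onto} $\ch(\mc{X})$, so that the index set of the outer minimization may be replaced from $\mc{Y}$ to $\ch(\mc{X})$ without altering the optimal value; the ``into'' direction guarantees we introduce no infeasible $p$, while the ``onto'' direction (via Carath\'eodory) guarantees we lose no attainable $p$. Everything else is routine reindexing.
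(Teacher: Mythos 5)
Your proof is correct and follows essentially the same route as the paper: the paper proves Lemma \ref{pch2} by invoking the argument of Lemma \ref{pch}, i.e., the marginal-probability substitution maps $\mc{Y}$ into $\ch(\mc{X})$ and Carath\'eodory's Theorem gives surjectivity, combined with the finite maximization already established in Lemma \ref{lama}. Your write-up simply makes the into/onto bookkeeping more explicit than the paper does.
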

\begin{proof}
By the same argument as the proof of Lemma \ref{pch}.
\end{proof}
Using Lemma \ref{pch2}, we can now formulate a linear program to solve the randomized minmax regret problem:
\begin{alignat}{3}
~\min \quad &z && \label{lp1}\\
\mathrm{s.t.} \quad & \sum_{e \in E \setminus T} c_e^+p_e-\sum_{e \in T} c_e^- \left( 1- p_e\right) \le z,\qquad  && \forall T \in \mc{F}, \label{sep1}\\
& p \in \ch{(\mc{X})}. \label{sep2}
\end{alignat}
\nin While the above program may have an exponential number of constraints, it can be solved efficiently via the ellipsoid algorithm if a separation oracle is available for the constraints \eqref{sep1} and \eqref{sep2}. This brings us to our main result.
\begin{thm}
For interval uncertainty, if the nominal problem $F^*(c)$ can be solved in time polynomial in $n$ and $m$, then the corresponding randomized minmax regret problem $\min_{y \in \mc{Y}} \max_{c \in \mc{I}}(\overline{F}(y,c) - F^*(c))$ can be solved in time polynomial in $n$ and $m$.
\end{thm}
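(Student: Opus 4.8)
The plan is to appeal to Lemma \ref{pch2}, which shows that $Z_\mathrm{R}$ equals the optimal value of the linear program \eqref{lp1} - \eqref{sep2}, and then argue that this program is solvable in polynomial time by the ellipsoid method. The program has only the $n$ coordinates of $p$ together with the single scalar $z$, so the number of variables is polynomial in $n$; the obstacles are the exponentially many constraints \eqref{sep1} (one per $T \in \mc{F}$) and the implicit description of $\ch(\mc{X})$ in \eqref{sep2}. It therefore suffices to exhibit polynomial-time separation oracles for both constraint families, after which the ellipsoid algorithm returns an optimal $(p^*, z^*)$ in time polynomial in $n$ and $m$.

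For the family \eqref{sep1}, I would observe that, for a fixed candidate point $(p,z)$, the most violated constraint is found by maximizing the left-hand side over $T \in \mc{F}$. The key step is to rewrite
\[
\sum_{e \in E \setminus T} c_e^+ p_e - \sum_{e \in T} c_e^- (1-p_e) = \sum_{e \in E} c_e^+ p_e - \sum_{e \in T} \bigl( c_e^+ p_e + c_e^- (1-p_e) \bigr),
\]
where the first sum does not depend on $T$. Hence maximizing over $T$ is equivalent to minimizing $\sum_{e \in T} d_e$ with the modified costs $d_e := c_e^+ p_e + c_e^- (1-p_e)$, which is precisely the nominal problem $F^*(d)$. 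By hypothesis $F^*$ is solvable in time polynomial in $n$ and $m$, so we can compute the maximizing set $T$ and compare $\sum_{e \in E} c_e^+ p_e - F^*(d)$ against $z$; this yields the separation oracle for \eqref{sep1}, and the violating $T$ (when one exists) is the argmin returned by the nominal solver.

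For the constraint \eqref{sep2} the plan mirrors the discrete case: optimizing a linear objective over $\mc{X}$ is exactly the nominal problem, so by the polynomial equivalence of separation and optimization \cite{grot81}, a polynomial-time optimization oracle over $\mc{X}$ yields a polynomial-time separation oracle over $\ch(\mc{X})$. Combining the two oracles, the ellipsoid method solves \eqref{lp1} - \eqref{sep2} in time polynomial in $n$ and $m$ and returns an optimal marginal vector $p^* \in \ch(\mc{X})$, from which Lemma \ref{recover} recovers a polynomial-size mixed strategy encoding realizing $p^*$. I expect the main obstacle to be precisely the cost-modification identity above: the reduction works only because the $T$-independent term $\sum_{e \in E} c_e^+ p_e$ separates out cleanly, collapsing the worst-case interval adversary into a single nominal instance whose costs $d_e$ interpolate between the lower and upper bounds according to the marginal probabilities $p_e$.
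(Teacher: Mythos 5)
Your proof is correct and follows essentially the same route as the paper: the ellipsoid method on the LP \eqref{lp1}--\eqref{sep2}, separation over $\ch(\mc{X})$ via the equivalence of optimization and separation, and a separation oracle for \eqref{sep1} obtained by solving the nominal problem with modified costs, followed by strategy recovery via Lemma \ref{recover}. Note that your costs $d_e = c_e^+ p_e + c_e^-(1-p_e)$ are algebraically identical to the paper's $d_e = c_e^- + p_e(c_e^+ - c_e^-)$, so even the cost-modification step coincides exactly.
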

\begin{proof}
Consider the linear program \eqref{lp1} - \eqref{sep2}. The separation oracle for the constraints \eqref{sep2} is given by the equivalence of optimization and separation \cite{grot81}. To see the separation oracle for the constraint \eqref{sep1}, we define the item cost vector $d = (d_1,\ldots,d_n)$ where
\bea
d_e = c_e^- + p_e(c_e^+-c_e^-), \quad e \in E,
\eea
and then solve
\bea
z_d = \min_{T \in \mc{F}} \sum_{e \in T} d_e.
\eea
Let $T_d$ be the set that minimizes the above expression. If $\sum_{e \in E} c_e^+ p_e - z_d \le z$, then we are guaranteed feasibility, otherwise the separating hyperplane \eqref{sep1} is generated where $T = T_d$. To see the validity of this approach, we have
\bea
\sum_{e \in E} c_e^+ p_e - z_d &=&\sum_{e \in E} c_e^+ p_e - \min_{T \in \mc{F}} \sum_{e \in T} d_e \n \\
&=&\sum_{e \in E} c_e^+ p_e - \min_{T\in \mc{F}} \sum_{e \in T} (c_e^- + p_e(c_e^+-c_e^-))\n \\
&=&\max_{T \in \mc{F}} \left (\sum_{e \in E} c_e^+ p_e -\sum_{e \in T} (c_e^- + p_e(c_e^+-c_e^-)) \right)\n \\
&=&\max_{T \in \mc{F}} \left (\sum_{e \in E \setminus T} c_e^+ p_e -\sum_{e \in T} c_e^-(1-p_e) \right).
\eea
The solution to the linear program \eqref{lp1}~-~\eqref{sep2} is a vector $p$, which can then be used to find a mixed strategy $y$ in polynomial time using Lemma \ref{recover}.
\end{proof}
\begin{coro} 
For interval uncertainty, $Z_\mathrm{R} \le Z_\mathrm{D}.$
\label{zrdinterval}
\end{coro}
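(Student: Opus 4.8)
The plan is to exhibit the randomized program as the continuous relaxation of the deterministic one, exactly mirroring the proof of the analogous corollary for discrete scenario uncertainty. By Lemma \ref{detexp}, the deterministic value $Z_\mathrm{D}$ is the optimum of the integer program \eqref{lp1det}, whose decision variables are $z$ together with an item-selection vector $x \in \mc{X}$. By Lemma \ref{pch2} and the formulation \eqref{lp1}--\eqref{sep2}, the randomized value $Z_\mathrm{R}$ is the optimum of the linear program in the variables $z$ and $p$, with $p \in \ch(\mc{X})$.

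First I would observe that the two programs share the identical objective $\min z$ and, for each $T \in \mc{F}$, the identical constraint
\[
\sum_{e \in E \setminus T} c_e^+ \xi_e - \sum_{e \in T} c_e^-(1-\xi_e) \le z,
\]
read with $\xi = x$ in the deterministic case and $\xi = p$ in the randomized case. The left-hand side is affine in the selection vector $\xi$, so the constraint extends verbatim from integer points to arbitrary points of the convex hull. The only difference between the two programs is therefore the feasible set for $\xi$: the integer program requires $\xi = x \in \mc{X}$, whereas the randomized program allows $\xi = p \in \ch(\mc{X})$. Since $\mc{X} \subseteq \ch(\mc{X})$, every feasible point $(z,x)$ of \eqref{lp1det} is also a feasible point $(z,p)$ of \eqref{lp1}--\eqref{sep2} attaining the same objective value, so relaxing integrality can only lower the minimum. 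Hence $Z_\mathrm{R} \le Z_\mathrm{D}$.

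The argument is essentially immediate, so there is no real obstacle: the one point worth confirming is that each constraint is genuinely linear (indeed affine) in $\xi$, which is what licenses replacing $\mc{X}$ by its convex hull without changing the constraint form. In particular, no separation oracle or duality is invoked here, in contrast to the polynomial-solvability theorem that precedes the corollary; the content is purely that an LP relaxation lower-bounds the integer optimum.
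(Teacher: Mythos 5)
Your proposal is correct and is essentially the paper's own proof: the paper also establishes the corollary by observing that the program \eqref{lp1}--\eqref{sep2} is the linear programming relaxation of the integer program \eqref{lp1det}, so that relaxing $x \in \mc{X}$ to $p \in \ch(\mc{X})$ can only decrease the optimum. You have merely spelled out explicitly the standard relaxation argument that the paper leaves implicit.
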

\begin{proof}
This follows simply by noting that the program \eqref{lp1} - \eqref{sep2} is the linear programming relaxation of \eqref{lp1det}.
\end{proof}

\subsection{Adversary}
The set $\mc{I}_f$ is necessary for describing the distribution of the adversary. The distribution over costs is $w = (w_c)_{c \in \mc{I}_f}$ and $\mc{W}$ again indicates the set of valid distributions. The minimum expected regret problem is
\bea
\overline{R}_{\min}(w) &=& \min_{T \in \mc{F}} \sum_{c \in \mc{I}_f} w_c R(T,c) \n \\
&=& \min_{T \in \mc{F}} \sum_{c \in \mc{I}_f} w_c\left(F(T,c)-F^*(c)\right).
\eea
The adversarial randomized maxmin regret problem is then
\bea
Z_{\mathrm{AR}} = \max_{w \in \mc{W}} \overline{R}_{\min}(w) = \max_{w \in \mc{W}} \min_{T \in \mc{F}}  \sum_{c \in \mc{I}_f} w_c\left(F(T,c)-F^*(c)\right).
\eea

We can directly formulate a linear program for the adversarial maxmin regret problem, explicitly writing the constraints for $w \in \mc{W}$.
\begin{alignat}{4}
\max \quad &z &&& \label{advstart1} \\
\mathrm{s.t.} \quad & \sum_{c \in \mc{I}_f} w_c \left(\sum_{e \in T} c_e-F^*(c) \right) \ge z,\qquad  && \forall T \in \mc{F},\qquad&\\
& \sum_{c \in \mc{I}_f} w_c = 1, &&& \\
& w_c \ge \bs{0}. \label{advstart2} &&&
\end{alignat}
Since this program has an exponential number of constraints and potentially an exponential number of variables, we consider its dual. We expect this ``dual of the dual" program to be the primal linear program solved by the optimizing player; this will indeed be the case after some manipulation. The dual of \eqref{advstart1} - \eqref{advstart2} is 
\begin{alignat}{4}
\min \quad &\beta &&& \\
\mathrm{s.t.} \quad & \sum_{T \in \mc{F}} \alpha_T \left(\sum_{e \in T} c_e-F^*(c) \right) \le \beta,\qquad  && \forall c \in \mc{I}_f,\qquad& \\
& \sum_{T\in \mc{F}} \alpha_T = 1, &&& \\
& \alpha_T\ge \bs{0}. &&&
\end{alignat}
To simplify, note that for a feasible $\alpha = (\alpha_T)_{T \in \mc{F}}$, we have $\sum_{T \in \mc{F}} \alpha_T F^*(c) = F^*(c)$. Furthermore,
\bea
\sum_{T \in \mc{F}} \alpha_T \sum_{e \in T} c_e = \sum_{e \in E} c_e \sum_{T \in \mc{F}: e \in T} \alpha_T.
\eea
We use the substitution
\bea
q_e := \sum_{T \in \mc{F}:e \in T} \alpha_T, \quad e \in E.
\eea
Let $q = (q_1,\ldots,q_n)$. The substitution yields the linear program
\begin{alignat}{4}
\min \quad &\beta &&& \\
\mathrm{s.t.} \quad & \sum_{e \in E} c_e q_e - F^*(c) \le \beta,\qquad  && \forall c \in \mc{I}_f,\qquad&\label{aconstrq}\\
& q \in \ch{(\mc{X})}. &&&
\end{alignat}
This program no longer has an exponential number of variables, and the exponential number of constraints can be handled via separation, which we describe shortly. First, for some set $A \in \mc{F}$, define the cost vector $c^A = (c_e^A)_{e \in E}$ where
\bea
\label{cadef}
c_e^A := \left \{
\begin{array}{ll}
c_e^-,& e \in A, \\
c_e^+,& e \in E\setminus A.
\end{array}
\right. 
\eea
That is, $c^A$ is the cost vector where all costs are equal to their upper bound, except for costs in the set $A$, which are equal to their lower bound. The theorem below shows that without loss of generality, we can can consider cost vectors of this form for separation of the constraint \eqref{aconstrq}. This allows us to define $\mc{I}_f$ as the set of all cost vectors $\{c^A, ~A \in \mc{F} \}$. Since this may still be an exponentially sized set, we define an adversarial mixed strategy encoding $\mc{L} = (C,W)$ as a set of costs $C = \{c^{A_j} \in \mc{I}_f,~ j = 1,\ldots,\eta\}$ to be selected with corresponding probabilities $W = \{w_{c^{A_j}} \in [0,1],~j=1,\ldots,\eta\}$ satisfying $\sum_{j = 1}^\eta w_{c^{A_j}} = 1$.
\begin{thm}
For interval uncertainty, if the nominal problem $F^*(c)$ can be solved in time polynomial in $n$ and $m$, then the corresponding randomized adversarial maxmin regret problem \\$\max_{w \in \mc{W}} \min_{T \in \mc{F}}  \sum_{c \in \mc{I}_f} w_S\left(F(T,c)-F^*(c)\right)$ can be solved in time polynomial in $n$ and $m$.
\end{thm}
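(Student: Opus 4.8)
The plan is to solve the reformulated program in the variables $q$ and $\beta$ --- the ``dual of the dual,'' consisting of the constraint \eqref{aconstrq} for $c \in \mc{I}_f$ together with $q \in \ch(\mc{X})$ --- whose optimal value is $Z_{\mathrm{AR}}$, by the ellipsoid method, and then to recover an explicit polynomial-size adversarial mixed strategy encoding $\mc{L} = (C,W)$. The constraint $q \in \ch(\mc{X})$ is separated through the equivalence of optimization and separation \cite{grot81}, just as for the optimizing player, so the two substantive tasks are separating the exponentially-indexed family \eqref{aconstrq} and extracting a distribution $w$.

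The crux, and the step I expect to be the main obstacle, is reducing separation of \eqref{aconstrq} to a single call of the nominal problem. For a fixed $q$ the most violated constraint is found by computing $\max_{c \in \mc{I}}\big(\sum_{e \in E} c_e q_e - F^*(c)\big)$. Writing $F^*(c) = \min_{T \in \mc{F}} \sum_{e \in T} c_e$ and interchanging the maximizations gives
\[
\max_{c \in \mc{I}}\Big(\sum_{e \in E} c_e q_e - F^*(c)\Big) = \max_{T \in \mc{F}} \; \max_{c \in \mc{I}} \Big( \sum_{e \in E \setminus T} c_e q_e - \sum_{e \in T} c_e (1-q_e) \Big).
\]
Because $q \in \ch(\mc{X})$ forces each $q_e \in [0,1]$, for fixed $T$ the inner maximum is attained at $c_e = c_e^+$ on $E \setminus T$ and $c_e = c_e^-$ on $T$, that is, at the cost vector $c^T$ of \eqref{cadef}; this is the same computation as in Lemma \ref{lama}. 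It shows both that the worst case always lies at one of the structured extreme points $c^A$, justifying $\mc{I}_f = \{c^A : A \in \mc{F}\}$, and that the separation value equals $\max_{T \in \mc{F}}\big(\sum_{e \in E \setminus T} c_e^+ q_e - \sum_{e \in T} c_e^-(1-q_e)\big)$, precisely the quantity computed by the optimizing player's oracle in the preceding theorem. I would therefore reuse that oracle verbatim: set $d_e = c_e^- + q_e(c_e^+ - c_e^-)$, solve the nominal problem for $T_d = \argmin_{T \in \mc{F}} \sum_{e \in T} d_e$, compare $\sum_{e \in E} c_e^+ q_e - \sum_{e \in T_d} d_e$ with $\beta$, and, if this exceeds $\beta$, return the violated constraint at $c = c^{T_d}$. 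Hence the $q$-program is solvable in time polynomial in $n$ and $m$, its value equals $Z_{\mathrm{R}}$, and strong duality re-establishes $Z_{\mathrm{AR}} = Z_{\mathrm{R}}$.

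To produce the adversary's strategy, I would collect the polynomially many constraints \eqref{aconstrq} generated while solving the $q$-program, each indexed by a cost vector $c^A$; call this polynomial set $C$. The ellipsoid method guarantees that restricting the $q$-program to the constraints in $C$ leaves its optimum at $Z_{\mathrm{AR}}$, and the LP dual of this restricted program is exactly the adversary's program \eqref{advstart1}--\eqref{advstart2} with $w$ supported on $C$ --- a program with polynomially many variables whose remaining, $T$-indexed constraints are separated by the nominal problem with aggregate costs $\bar{c}_e = \sum_{c \in C} w_c c_e$. Solving it yields an optimal $W = \{w_{c^A}\}$ of value $Z_{\mathrm{AR}}$, so $\mc{L} = (C,W)$ is the required encoding. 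This recovery parallels Lemma \ref{recover}: in each case one reconstructs a polynomially supported distribution consistent with the aggregates returned by the linear program, using the nominal problem as the separation engine. Once the extreme-point reduction is in hand, polynomial solvability follows by reusing the optimizing player's oracle unchanged, and the remaining recovery is a routine finite-support duality argument; the genuine content is the verification that the supremum over the interval set $\mc{I}$ is always attained at a feasible-set-indexed vector $c^A$.
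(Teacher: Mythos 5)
Your proposal is correct and takes essentially the same approach as the paper: both dualize the adversary's program, pass to the marginal vector $q$, reduce separation of \eqref{aconstrq} to the nominal problem via the analysis of Lemma \ref{lama} (so that the $q$-program is recognized as the optimizing player's LP \eqref{lp1}--\eqref{sep2}, which also justifies $\mc{I}_f = \{c^A : A \in \mc{F}\}$), and recover the adversary's mixed strategy from the dual variables attached to the polynomially many generated constraints. The only cosmetic difference is that you re-solve the restricted dual explicitly to extract $W$, whereas the paper reads off the probabilities $w_{c^T}$ directly as the dual variables of the separating hyperplanes.
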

\begin{proof}
The constraint \eqref{aconstrq} is simply the maximum regret problem for a given vector $q$ and can be generated via the nominal problem,
\bea
\max_{c \in \mc{I}_f} \left( \sum_{e \in E}  c_e q_e - F^*(c) \right) &=& \max_{T \in \mc{F}} \left (\sum_{e \in E \setminus T} c_e^+ q_e -\sum_{e \in T} c_e^-(1-q_e) \right),
\eea
where we have used the analysis in Lemma \ref{lama}. This allows us to write the linear program as
\begin{alignat}{4}
\min \quad &\beta \label{advrepfirst}&&& \\
\mathrm{s.t.} \quad & \sum_{e \in E \setminus T} c_e^+ q_e -\sum_{e \in T} c_e^-(1-q_e) \le \beta,\qquad  && \forall T \in \mc{F},\qquad& \label{advrepmiddle}\\
& q \in \ch{(\mc{X})} \label{advreplast}. &&&
\end{alignat}
Note this is precisely the linear program \eqref{lp1} - \eqref{sep2} solved by the optimizing player.
This justifies the assumption of the finite set $\mc{I}_f$: only a polynomial number of separating cost vectors will be generated, and they will be of the form $c^A$ as defined in \eqref{cadef}. The adversary is of course interested in the dual variables of the linear program \eqref{advrepfirst} - \eqref{advreplast}. Each separating hyperplane generated for the constraint \eqref{advrepmiddle} gives a set $T \in \mc{F}$ for which the adversary adds the cost vector $c^T$ to his mixed strategy; this cost vector has probability $w_{c^T}$ in the mixed strategy, given by the corresponding dual variable.
\end{proof}
\begin{coro}
For interval uncertainty, $Z_{\mathrm{R}} = Z_{\mathrm{AR}}$.
\end{coro}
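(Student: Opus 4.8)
The plan is to obtain the identity directly from linear programming strong duality, with essentially all of the required algebra already carried out in the preceding theorem. First I would record that $Z_{\mathrm{AR}}$ is, by construction, the optimal value of the adversary's linear program \eqref{advstart1}~-~\eqref{advstart2}. The preceding theorem showed that taking the dual of this program, applying the substitution $q_e = \sum_{T \in \mc{F}:\, e \in T} \alpha_T$, using the identity $\sum_{T \in \mc{F}} \alpha_T F^*(c) = F^*(c)$, and then invoking Lemma \ref{lama} to collapse the continuum of constraints over $c \in \mc{I}_f$ into the finite family indexed by $T \in \mc{F}$, yields precisely the program \eqref{advrepfirst}~-~\eqref{advreplast}. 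That program is identical to the optimizing player's linear program \eqref{lp1}~-~\eqref{sep2}, whose optimal value is $Z_{\mathrm{R}}$. Thus the work of identifying the dual has effectively been done already, and the corollary reduces to reading off a duality relationship.

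With this correspondence in hand, the second step is to invoke strong duality: the value of a linear program equals that of its dual whenever both are feasible and bounded. Feasibility of the adversary's program is immediate, since a uniform distribution over any finite support in $\mc{I}_f$ satisfies the constraints; boundedness follows from weak duality, since any feasible $p \in \ch(\mc{X})$ in the optimizing player's (dual) program furnishes a finite upper bound on the adversary's objective. Strong duality then gives $Z_{\mathrm{AR}} = Z_{\mathrm{R}}$, in exact parallel with the discrete scenario corollary.

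The only point requiring care is the legitimacy of working with the finite set $\mc{I}_f$ rather than the full interval region $\mc{I}$: one must be sure that restricting the adversary to cost vectors of the form $c^A$ defined in \eqref{cadef} loses nothing. This is exactly what Lemma \ref{lama} and the preceding theorem establish, since for any $q \in \ch(\mc{X})$ the maximizing cost vector is attained at such an extreme vector, so the finite-support assumption is without loss of generality and the dual manipulation is valid. I expect this verification, rather than the invocation of duality itself, to be the only (and a rather modest) obstacle.
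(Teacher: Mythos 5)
Your proposal is correct and takes essentially the same route as the paper: the paper's own proof simply observes that the optimizing player's program \eqref{lp1}--\eqref{sep2} is the dual of the adversary's program \eqref{advstart1}--\eqref{advstart2} (the identification having been carried out in the preceding theorem via the substitution $q_e = \sum_{T \in \mc{F}: e \in T} \alpha_T$ and Lemma \ref{lama}) and invokes strong duality. Your added remarks on feasibility, boundedness, and the legitimacy of restricting to the finite set $\mc{I}_f$ merely make explicit what the paper leaves implicit.
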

\begin{proof}
By strong duality, since the optimizing player solves \eqref{lp1}-\eqref{sep2}, which is the dual of the linear program \eqref{advstart1} - \eqref{advstart2} for the adversary.
\end{proof}

\subsection{Primal-Dual Approximation}
The primal linear program for the optimizing player is \eqref{lp1}-\eqref{sep2} and the dual linear program for the adversary is \eqref{advstart1} - \eqref{advstart2}. Using a similar approach to the previous section, we devise a primal-dual approximation algorithm for the deterministic minmax regret problem.

We rewrite the dual linear program as
\begin{alignat}{4}
\max \quad &z - \sum_{c \in \mc{I}_f} w_c F^*(c) &&& \\
\mathrm{s.t.} \quad & \sum_{c \in \mc{I}_f} w_c \sum_{e \in U} c_e \ge z,\qquad  && \forall U \in \mc{F},\qquad&\\
& \sum_{c \in \mc{I}_f} w_c = 1, &&&\\
& w_c \ge \bs{0}. &&&
\end{alignat}
We must select a feasible solution for $w$; we will make the simple choice of setting $w_c = 1/2$ for two cost vectors. 
Recall the definition of the cost vector $c^A$. For some set $A \in \mc{F}$, we have $c^A = (c^A_e)_{e \in E}$ where
\bea
c_e^A = \left \{
\begin{array}{ll}
c_e^-,& e \in A, \\
c_e^+,& e \in E\setminus A,
\end{array}
\right. 
\eea
Additionally, we define $c^{\overline{A}} = (c^{\overline{A}}_e)_{e \in E}$ where
\bea
c_e^{\overline{A}} = \left \{
\begin{array}{ll}
c_e^+,& e \in A, \\
c_e^-,& e \in E\setminus A.
\end{array}
\right.
\eea
We set $w_c = 1/2$ for $c = c^A$ and $c = c^{\overline{A}}$, so the linear program becomes
\begin{alignat}{4}
\max \quad &z - \frac{1}{2} \left (F^*(c^A) - F^*(c^{\overline{A}}) \right) &&& \\
\mathrm{s.t.} \quad & \sum_{e \in U} \left( \frac{c_e^-+c_e^+}{2} \right) \ge z,\qquad  && \forall U \in \mc{F}.\qquad&
\end{alignat}
Under the primal dual approach, we increase $z$ until one of the constraints becomes tight. The first tight constraint corresponds to the primal solution $M$ that has minimum total cost under the midpoint costs:
\bea
M := \argmin_{U \in \mc{F}} \sum_{e \in U} \left( \frac{c_e^- + c_e^+}{2} \right).
\eea 
This gives the objective value
\bea
\frac{1}{2} \left( \sum_{e \in M} ( c_e^- + c_e^+) - F^*(c^A) - F^*(c^{\overline{A}}) \right).
\eea
By choosing the set $A$ to be the midpoint cost minimizing set $M$, we can write the resulting objective value in terms of the maximum deterministic regret.
\begin{lma}
For $A = M$,
\bea
\sum_{e \in M} ( c_e^- + c_e^+) - F^*(c^A) - F^*(c^{\overline{A}}) = R_{\max}(M).
\eea
\label{lmaam}
\end{lma}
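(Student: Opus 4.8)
The plan is to rewrite the left-hand side as a sum of two ordinary (deterministic) regrets of the solution $M$, evaluated at the two cost vectors $c^A$ and $c^{\overline A}$ with $A = M$, and then to evaluate each of these two regrets separately. The starting point is the pointwise identity $c^M_e + c^{\overline M}_e = c_e^- + c_e^+$ for every $e\in E$: for $e\in M$ the two vectors contribute $c_e^-$ and $c_e^+$, and for $e\notin M$ they contribute $c_e^+$ and $c_e^-$, so in either case the sum is $c_e^-+c_e^+$. Restricting to $e\in M$ and recalling $F(M,c) = \sum_{e\in M}c_e$, this gives $\sum_{e\in M}(c_e^- + c_e^+) = F(M, c^M) + F(M, c^{\overline M})$. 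Substituting into the left-hand side and regrouping, the quantity to be computed becomes
\[
\big(F(M,c^M) - F^*(c^M)\big) + \big(F(M,c^{\overline M}) - F^*(c^{\overline M})\big) = R(M, c^M) + R(M, c^{\overline M}).
\]
It then remains to show $R(M, c^M) = 0$ and $R(M, c^{\overline M}) = R_{\max}(M)$.

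The second identity is the assertion that $c^{\overline M}$ (the scenario with the items of $M$ at their upper bounds and all others at their lower bounds) realizes the maximum regret of $M$. This follows directly from Lemma \ref{detexp}: applying its formula with $x$ equal to the characteristic vector of $M$ gives $R_{\max}(M) = \max_{T\in\mc{F}}\left(\sum_{e\in M\setminus T} c_e^+ - \sum_{e\in T\setminus M} c_e^-\right)$, while expanding $R(M, c^{\overline M}) = \sum_{e\in M}c_e^+ - \min_{T}\left(\sum_{e\in T\cap M}c_e^+ + \sum_{e\in T\setminus M}c_e^-\right)$ produces exactly the same expression after cancelling $\sum_{e\in T\cap M}c_e^+$ against part of $\sum_{e\in M}c_e^+$.

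The main obstacle is the first identity, $R(M, c^M) = 0$, equivalently $F^*(c^M) = \sum_{e\in M}c_e^- = F(M,c^M)$; that is, $M$ must be an optimal nominal solution under the favorable scenario $c^M$. Since $F^*(c^M)\le F(M,c^M)$ holds trivially, I only need $F(M,c^M) \le F(T, c^M)$ for every $T\in\mc{F}$, which after cancelling the common items $e\in M\cap T$ reduces to $\sum_{e\in M\setminus T}c_e^- \le \sum_{e\in T\setminus M}c_e^+$. This is precisely where the midpoint-minimality of $M$ enters: since $M$ minimizes $\sum_{e\in U}(c_e^-+c_e^+)$ (equivalently the midpoint cost $\sum_{e\in U}\frac{c_e^-+c_e^+}{2}$), cancelling common items yields $\sum_{e\in M\setminus T}(c_e^-+c_e^+) \le \sum_{e\in T\setminus M}(c_e^-+c_e^+)$. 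Combining this with the interval bounds $2c_e^- \le c_e^- + c_e^+ \le 2c_e^+$ gives the chain
\[
2\sum_{e\in M\setminus T}c_e^- \le \sum_{e\in M\setminus T}(c_e^- + c_e^+) \le \sum_{e\in T\setminus M}(c_e^- + c_e^+) \le 2\sum_{e\in T\setminus M}c_e^+,
\]
and dividing by $2$ yields the required inequality, hence $R(M, c^M)=0$. Assembling the two identities then gives $R(M,c^M)+R(M,c^{\overline M}) = R_{\max}(M)$, completing the proof.
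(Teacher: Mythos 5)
Your proof is correct and takes essentially the same route as the paper: both arguments rest on the two facts that the worst-case scenario for $M$ is $c^{\overline{M}}$ (so $R_{\max}(M) = \sum_{e \in M} c_e^+ - F^*(c^{\overline{M}})$) and that midpoint-minimality of $M$ forces $F^*(c^M) = \sum_{e \in M} c_e^-$. The only differences are cosmetic: you package the algebra as $R(M,c^M) + R(M,c^{\overline{M}})$, and you spell out the inequality chain behind $F^*(c^M) = \sum_{e \in M} c_e^-$, a step the paper disposes of as ``a simple argument.''
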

\begin{proof}
The maximum regret for the set $M$ can be expressed as
\bea
R_{\max}(M) = \sum_{e \in M} c_e^+ - \min_{T \in \mc{F}}  \left ( \sum_{e \in T \cap M} c_e^+ + \sum_{e \in T \setminus M} c_e^- \right).
\eea
Also note that
\bea
F^*(c^{\overline{A}}) = \min_{T \in \mc{F}} \left ( \sum_{e \in T \cap A} c_e^+ + \sum_{e \in T \setminus A} c_e^- \right).
\eea
Thus for $A = M$, we have
\bea
R_{\max}(M) = \sum_{e \in M} c_e^+ - F^*(c^{\overline{A}}).
\eea
It is left to show that $F^*(c^M) = \sum_{e \in M} c_e^-$. This, however, immediately follows with a simple argument. Since the set $M$ is minimum for midpoint costs, it must also be minimum for costs $c^M$ (i.e., the costs where $c_e = c_e^-$ for all $e \in M$ and $c_e = c_e^+$ for all $e \in E \setminus M$).
\end{proof}
This gives a new proof that solving the nominal problem with midpoint costs gives a $2$-approximation to the deterministic minmax regret problem. Once again, the result here is stronger than the result of \cite{kasperski06} since it states that the value of the approximate solution is within a factor $2$ of $Z_{\mathrm{R}}$ rather than just $Z_{\mathrm{D}}$.
\begin{thm}
For interval uncertainty, the solution to the nominal problem with midpoint costs is a $2$-approximation algorithm for the deterministic minmax regret problem.
\end{thm}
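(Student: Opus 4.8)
The plan is to mirror the $k$-approximation argument from the discrete scenario case, assembling a single chain of inequalities that sandwiches the objective value of the midpoint dual solution between $Z_\mathrm{D}/2$ and $Z_\mathrm{D}$. Since the midpoint-minimizing set $M$ is obtained by solving a single nominal problem, the only thing left to establish is the approximation guarantee $R_{\max}(M) \le 2 Z_\mathrm{D}$.

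First I would invoke the primal-dual construction preceding the theorem: setting $w_c = 1/2$ on the two cost vectors $c^A$ and $c^{\overline{A}}$ with $A = M$ produces a feasible solution to the adversary's (dual) linear program \eqref{advstart1} - \eqref{advstart2} whose objective value is $\tfrac{1}{2}\bigl(\sum_{e \in M}(c_e^- + c_e^+) - F^*(c^M) - F^*(c^{\overline{M}})\bigr)$. By Lemma \ref{lmaam}, this value equals exactly $\tfrac{1}{2} R_{\max}(M)$. Because the adversary's program is a maximization whose optimum is $Z_\mathrm{AR} = Z_\mathrm{R}$, weak duality (a feasible dual objective lower-bounds the optimum) immediately gives $\tfrac{1}{2} R_{\max}(M) \le Z_\mathrm{R}$.

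Next I would close the chain from both ends. On the upper side, Corollary \ref{zrdinterval} supplies $Z_\mathrm{R} \le Z_\mathrm{D}$, so $\tfrac{1}{2} R_{\max}(M) \le Z_\mathrm{D}$, i.e. $R_{\max}(M) \le 2 Z_\mathrm{D}$, which is precisely the claimed guarantee. On the lower side, the definition $Z_\mathrm{D} = \min_{T \in \mc{F}} R_{\max}(T) \le R_{\max}(M)$ yields $Z_\mathrm{D}/2 \le \tfrac{1}{2} R_{\max}(M) \le Z_\mathrm{R} \le Z_\mathrm{D}$, so that I would also recover the paper's sharper statement that $M$ is within a factor $2$ of the relaxation value $Z_\mathrm{R}$, not merely of $Z_\mathrm{D}$.

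The main obstacle has in fact already been dispatched by Lemma \ref{lmaam}: the crux is identifying the dual objective value of the midpoint construction with $\tfrac{1}{2} R_{\max}(M)$, which is what lets the relaxation bound $Z_\mathrm{R} \le Z_\mathrm{D}$ translate directly into a combinatorial $2$-approximation. Once that identification is in hand, the remaining steps are only weak duality and the definition of $Z_\mathrm{D}$, so I expect the formal proof to reduce to a two-line inequality chain.
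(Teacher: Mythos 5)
Your proposal is correct and follows essentially the same route as the paper: both use the midpoint dual-feasible solution with $w_{c^A} = w_{c^{\overline{A}}} = 1/2$ and $A = M$, identify its objective value with $\tfrac{1}{2}R_{\max}(M)$ via Lemma \ref{lmaam}, and then conclude via weak duality together with Corollary \ref{zrdinterval} through the chain $Z_\mathrm{D}/2 \le R_{\max}(M)/2 \le Z_\mathrm{R} \le Z_\mathrm{D}$. The paper's proof is exactly this two-line inequality chain, so no gap remains.
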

\begin{proof}
The linear program listed above is the dual of the problem solved by the optimizing player in the randomized framework. By weak duality, any feasible solution to the above program gives a lower bound on the value of the game in the randomized framework, $Z_{\mathrm{R}}$. The construction described above using costs $c^A$ and $c^{\overline{A}}$ gives a feasible solution to the program, and Lemma \ref{lmaam} allows us to express the resulting objective value in terms of the maximum deterministic regret for a solution set $M$ using $A=M$. Specifically,
\bea
\frac{Z_{\mathrm{D}}}{2} &\le& \frac{R_{\max}(M)}{2} \le Z_{\mathrm{R}} \le Z_{\mathrm{D}},
\eea
where the first inequality follows from the definition of the deterministic minmax regret problem, the second inequality follows using Lemma \ref{lmaam} with the feasible linear programming solution, and the third inequality follows from Theorem \ref{zrdinterval}.
\end{proof}
The potential gain from using randomization under interval uncertainty is not as significant as with discrete scenario uncertainty.
\begin{coro}
For interval uncertainty,
\bea
Z_{\mathrm{R}} \ge \frac{Z_{\mathrm{D}}}{2}.
\eea \label{ig2}
\end{coro}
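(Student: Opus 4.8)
The plan is to read the desired inequality directly off the chain of inequalities already established in the proof of the preceding theorem (the midpoint $2$-approximation result), so that essentially no new work is required. Recall that in that proof we produced, for the midpoint-minimizing set $M$, a \emph{feasible} solution to the adversary's (dual) linear program by placing weight $1/2$ on each of the two cost vectors $c^A$ and $c^{\overline{A}}$ with $A = M$. Its objective value is $\tfrac{1}{2}\bigl(\sum_{e \in M}(c_e^-+c_e^+) - F^*(c^M) - F^*(c^{\overline{M}})\bigr)$, which by Lemma \ref{lmaam} equals $\tfrac{1}{2} R_{\max}(M)$.

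First I would invoke weak duality: since the optimizing player's primal program \eqref{lp1}--\eqref{sep2} has optimal value $Z_{\mathrm{R}}$ and the adversary's program is its dual, every feasible dual solution lower-bounds $Z_{\mathrm{R}}$, giving $\tfrac{1}{2} R_{\max}(M) \le Z_{\mathrm{R}}$. Next I would apply the definition of the deterministic minmax regret, $Z_{\mathrm{D}} = \min_{T \in \mc{F}} R_{\max}(T) \le R_{\max}(M)$, to obtain $\tfrac{1}{2} Z_{\mathrm{D}} \le \tfrac{1}{2} R_{\max}(M)$. Chaining the two inequalities then yields $\tfrac{1}{2} Z_{\mathrm{D}} \le Z_{\mathrm{R}}$, which is precisely the claim.

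Since both ingredients---weak duality and the definition of $Z_{\mathrm{D}}$---are already in hand, there is no substantive obstacle; the corollary is merely the left portion of the inequality chain $\tfrac{1}{2} Z_{\mathrm{D}} \le \tfrac{1}{2} R_{\max}(M) \le Z_{\mathrm{R}} \le Z_{\mathrm{D}}$ displayed in the theorem's proof, isolated and stated on its own. The only points worth double-checking are that the dual solution used above is genuinely feasible and that Lemma \ref{lmaam} applies with $A = M$, both of which were verified in the preceding argument.
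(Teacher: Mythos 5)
Your proposal is correct and follows the paper's own route exactly: the corollary is obtained there by isolating the left portion of the chain $\tfrac{Z_{\mathrm{D}}}{2} \le \tfrac{R_{\max}(M)}{2} \le Z_{\mathrm{R}} \le Z_{\mathrm{D}}$ established in the proof of the midpoint $2$-approximation theorem, using the same dual-feasible solution, weak duality, and Lemma \ref{lmaam}. No differences of substance to report.
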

\nin  Independently of the nominal problem, the integrality gap for the minmax regret problem is equal to $2$. A tight example for the corollary is easily constructed. Consider a problem with two items $E = \{e_1, e_2\}$ where the optimizing player must choose one item. Let $(c_e^+, c_e^-) = (0,1)$ for both items $e = e_1, e_2$. It can be verified that $Z_\mathrm{D} = 1$ and $Z_\mathrm{R} = 1/2$.

\section{General Uncertainty Sets}
\label{nphardsec}
In this section, we show that if the uncertainty set $\mc{C}$ is allowed to be a general nonnegative convex set and the nominal problem is polynomial solvable, the maximum expected regret problem becomes \NP-hard. Note that the deterministic maximum regret problem is a special case of the maximum expected regret problem. The result of this section thus implies that both randomized and deterministic minmax regret problems are \NP-hard under general convex uncertainty, even if the nominal problem is polynomial solvable.

We restate the maximum expected regret problem for general uncertainty sets. For a given marginal probability vector $p$, the maximum expected regret problem is, starting with the first line of \eqref{pause},
\begin{flalign}
\overline{R}_{\max}(p)&=  \max_{c \in \mc{C}} \left (\sum_{e \in E} c_e p_e - F^*(c) \right) \n \\
&=\max_{c \in \mc{C}} \left (\sum_{e \in E} c_e p_e - \min_{x \in \mc{X}} \sum_{e \in E} c_e x_e \right)\n \\
&=\max_{c \in \mc{C}} \max_{x \in \mc{X}} \left (\sum_{e \in E} c_e ( p_e - x_e) \right).
\end{flalign}
Negating the objective function, the maximum expected regret problem is equivalent to
\bea
-\overline{R}_{\max}(p) = \min_{c \in \mc{C}} \min_{x \in \mc{X}} \sum_{e \in E} c_e ( x_e - p_e), \label{negrmax}
\eea
for a given $p \in \ch(\mc{X})$.

Before we examine the complexity of \eqref{negrmax}, we consider the following problem, which we refer to as the \textit{bilinear combinatorial problem}:
\bea
\min_{c \in \mc{C}} \min_{x \in \mc{X}} \sum_{e \in E} c_e x_e.
\eea
We demonstrate the hardness of this problem via a reduction from the Hamiltonian path problem; the proof is similar to the standard proof for showing that the intersection of three matroids is \NP-hard. 
\begin{lma}
\label{lma:bilin}
For polynomial solvable nominal problems $F^*(c) = \min_{x \in \mc{X}} \sum_{e \in E} c_e x_e$ and nonnegative convex uncertainty sets $\mc{C}$, the bilinear combinatorial problem $\min_{c \in \mc{C}} \min_{x \in \mc{X}} \sum_{e \in E} c_e x_e$ is \NP-hard.
\end{lma}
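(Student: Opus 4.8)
The plan is to follow the hint and reduce from the directed Hamiltonian path problem, recreating the classical encoding of a Hamiltonian path as a common independent set of three matroids, but splitting those three constraints between the two minimizations so that the nominal problem stays polynomial. Recall that given a digraph $G = (V,A)$ with $|V| = n$, a set of $n-1$ arcs forms a directed Hamiltonian path if and only if it is simultaneously independent in the out-degree partition matroid $M_{\mathrm{out}}$ (at most one arc leaving each vertex), the in-degree partition matroid $M_{\mathrm{in}}$ (at most one arc entering each vertex), and the graphic matroid $M_{\mathrm{graph}}$ of the underlying undirected graph (the arcs form a forest): such a set is a spanning tree whose in- and out-degree bounds force it to be a single directed path.

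First I would place two of the three matroids inside the solution set. I take $\mc{X}$ to be the set of characteristic vectors of size-$(n-1)$ common independent sets of $M_{\mathrm{out}}$ and $M_{\mathrm{graph}}$. Because minimizing a linear cost over the common independent sets of two matroids of a fixed size is solvable by weighted matroid intersection, the nominal problem $F^*(c) = \min_{x \in \mc{X}} \sum_{e} c_e x_e$ is polynomial, exactly as the hypothesis of the lemma requires. A vector in $\mc{X}$ is thus a spanning, out-degree-bounded tree, and it is a Hamiltonian path precisely when it is additionally independent in $M_{\mathrm{in}}$.

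The heart of the construction is to encode the remaining matroid $M_{\mathrm{in}}$ through the convex set $\mc{C}$, so that the inner minimization over $c$ behaves as an indicator of $M_{\mathrm{in}}$-independence. I would set $\mc{C} = \{c \ge \bs{0} : c_a + c_{a'} \ge 1 \text{ for every pair of distinct arcs } a, a' \text{ sharing a head}\}$, whose circuits (the same-head pairs) are only polynomially many, so $\mc{C}$ is a nonnegative convex set with a polynomial-size description. For a fixed $x$ with support $S$, the value $\min_{c \in \mc{C}} \sum_{e} c_e x_e$ is $0$ when $S$ contains no same-head pair (take $c_e = 1$ off $S$ and $c_e = 0$ on $S$, which is feasible and costs nothing), and it is at least $1$ when some same-head pair $\{a, a'\} \subseteq S$ (the covering constraint $c_a + c_{a'} \ge 1$ then lies entirely within the objective). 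Hence the bilinear optimum equals $0$ if some $x \in \mc{X}$ is $M_{\mathrm{in}}$-independent, that is, if a Hamiltonian path exists, and it is at least $1$ otherwise; the construction is clearly polynomial, yielding \NP-hardness.

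The main obstacle I anticipate is exactly this encoding step: arranging a nonnegative convex $\mc{C}$ whose inner linear minimization certifies the third combinatorial constraint, while simultaneously keeping $\mc{C}$ polynomially describable and the nominal problem polynomial. Choosing a partition matroid (rather than the graphic matroid) as the $\mc{C}$-encoded constraint is what makes the circuit-covering description compact, since its circuits are merely same-head arc pairs. The remaining care lies in fixing the cardinality to $n-1$ so that $\mc{X}$ excludes the trivial empty solution, and in handling endpoint restrictions if a fixed source and sink are desired, after which both directions of the equivalence follow as sketched.
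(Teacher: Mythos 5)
Your proof is correct, and it attacks the lemma in the same spirit as the paper --- both arguments reduce from directed Hamiltonian path and mimic the classical three-matroid-intersection hardness proof, using $c_e \in \{0,1\}$ as a ``blocked/selected'' indicator --- but the way you split the three constraints between $\mc{X}$ and $\mc{C}$ is genuinely different. The paper keeps the nominal problem as simple as possible: $\mc{X}$ is the set of spanning trees (so the nominal problem is minimum spanning tree), and \emph{both} degree constraints are pushed into $\mc{C}$ via per-vertex equalities $\sum_{e \in \delta^-(v)} c_e = |\delta^-(v)|-1$ (with analogous outgoing constraints, and exceptions at designated endpoints $s$ and $t$) together with box constraints $0 \le c_e \le 1$; a zero-cost spanning tree then has at most one selected incoming and at most one selected outgoing arc per vertex, which forces it to be a Hamiltonian path from $s$ to $t$. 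You instead place two of the matroids (graphic and out-degree) inside $\mc{X}$, so your nominal problem is weighted matroid intersection at cardinality $n-1$ --- still polynomial, but a heavier hammer than MST --- and you encode only the in-degree matroid in $\mc{C}$, which buys you a particularly clean uncertainty set: an unbounded covering polyhedron $c_a + c_{a'} \ge 1$ over same-head pairs, with no designated endpoints needed and a transparent $0$-versus-$\ge 1$ gap in the inner minimization. Both reductions are valid and polynomial; the paper's choice of MST has the side benefit that the same gadget is reused (with edge duplication) in the subsequent theorem on hardness of the maximum expected regret problem, while your decomposition shows that the uncertainty set can be kept extremely simple at the price of a more sophisticated (but still polynomial) solution set.
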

\begin{proof}
Recall that the directed Hamiltonian path problem asks the following: given a directed graph $G=(V,E)$ with a designated source node $s$ and terminal node $t$, does there exist a path starting at $s$ and ending at $t$ that visits each node exactly once? For a given instance of the directed Hamiltonian path problem, we construct an instance of the bilinear combinatorial problem such that it has an optimal objective value of zero if an only if the graph contains a valid Hamiltonian path. 

We construct the set $\mc{C}$ to indicate the selection of edges such that each vertex has exactly one incoming edge (except for vertex $s$) and one outgoing edge (except for vertex $t$). Specifically, we say that an edge $e$ is selected if its cost $c_e$ is equal to zero, otherwise we refer to it as blocked. The notation $\delta^+(v)$ (respectively $\delta^-(v)$) indicates the set of outgoing (incoming) edges for vertex $v$. The constraints for the set $\mc{C}$ are
\bea
\label{Cconstraints}
\sum_{e \in \delta^-(v) } c_e &=& | \delta^-(v)| - 1, \quad v \in V \setminus \{s \}, \n \\
\sum_{e \in \delta^-(s) } c_e &=& | \delta^-(s)|, \n \\
\sum_{ e \in \delta^+(v) } c_e &=& | \delta^+(v)| - 1,\quad v \in V \setminus \{t \}, \n \\
\sum_{e \in \delta^+(t) } c_e &=& | \delta^+(t)|, \n \\
0 \le c_e \le 1, && \quad e \in E.
\eea
Note that for a given vertex, if one if its incoming edges is selected ($c_e = 0$), then all the remaining incoming edges must be blocked ($c_e = 1$); the same holds for outgoing edges. 

Define $\mc{X}$ to indicate the set of all feasible spanning trees for $G$, so that $\min_{x \in \mc{X}} \sum_{e \in E} c_e x_e$ is the minimum spanning tree problem. For an optimal solution $x^* = (x_1^*, \ldots, x_n^*)$ to the bilinear combinatorial problem giving zero objective value, the set $\{e | x_e^* = 0, e \in E\}$ indicates a valid Hamiltonian path, as the construction of $\mc{C}$ indicates that all vertices have one selected incoming and outgoing edge (except for $s$ and $t$), and the spanning tree ensures that no cycles are present. Finally, we have that the construction of the set $\mc{C}$ can be done in polynomial time.
\end{proof}
\begin{thm}
For polynomial solvable nominal problems $F^*(c) = \min_{x \in \mc{X}} \sum_{e \in E} c_e x_e$ and nonnegative convex uncertainty sets $\mc{C}$, the maximum expected regret problem $\max_{c \in \mc{C}} \left (\sum_{e \in E} c_e p_e - F^*(c) \right)$ where $p \in \ch(\mc{X})$ is \NP-hard.
\end{thm}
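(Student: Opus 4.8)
The plan is to reduce from the bilinear combinatorial problem, which Lemma~\ref{lma:bilin} shows is \NP-hard. Starting from the negated form \eqref{negrmax}, we have $-\overline{R}_{\max}(p) = \min_{c\in\mc C}\min_{x\in\mc X}\sum_{e\in E} c_e(x_e - p_e) = \min_{c\in\mc C}\big(F^*(c) - \sum_{e\in E} c_e p_e\big)$. If we were free to take $p = \bs{0}$, this would collapse to $\min_{c\in\mc C} F^*(c)$, the bilinear objective, and we would be done immediately. The obstacle is that $p$ must be a genuine marginal vector, $p\in\ch(\mc X)$, whereas in the Hamiltonian-path construction of Lemma~\ref{lma:bilin} the set $\mc X$ consists of spanning trees, whose convex hull does not contain the origin. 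One cannot merely append the empty solution to force $\bs{0}\in\ch(\mc X)$ either: with nonnegative costs the empty set would always minimize $F^*$, collapsing the objective to $0$ and destroying the reduction.

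The fix I would use is to \emph{pad} the instance with a single decoy item $g$, so that a legitimate marginal vector reproduces the effect of $p=\bs{0}$ on the original coordinates while contributing a fixed, controllable offset. Concretely, given a bilinear instance $(\mc X,\mc C)$ on ground set $E$ with costs in $[0,1]$, I would form the ground set $E\cup\{g\}$, the solution set $\mc X' = \{(x,0): x\in\mc X\}\cup\{(\bs{0}, 1)\}$, and the uncertainty set $\mc C' = \{(c, M): c\in\mc C\}$, where $M$ is a constant at least as large as the maximum possible value of $F^*$ over $\mc C$ (for example $M = n$, since any spanning tree has at most $n$ edges and costs lie in $[0,1]$). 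The extra feasible solution $x^g := (\bs{0},1)$ yields a valid marginal vector $p := x^g\in\ch(\mc X')$ with $p_e = 0$ on every original item and $p_g = 1$.

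With this choice I would compute $\overline{R}_{\max}(p)$ directly. The nominal problem over $\mc X'$ is $F^*_{\mc X'}(c,M) = \min\{F^*(c), M\}$, which remains polynomial solvable, while the linear term $\sum_e c'_e p_e$ equals $M$ (the cost on $g$ times $p_g=1$). Hence $\overline{R}_{\max}(p) = \max_{c\in\mc C}\big(M - \min\{F^*(c), M\}\big)$; since $M\ge F^*(c)$ for every $c\in\mc C$, the decoy is never selected by the inner minimization, so $\min\{F^*(c),M\}=F^*(c)$ and the expression collapses to $\overline{R}_{\max}(p) = M - \min_{c\in\mc C} F^*(c)$. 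Thus an oracle for the maximum expected regret recovers the bilinear value $\min_{c\in\mc C} F^*(c)$, which equals $0$ exactly when the original graph admits a Hamiltonian path. As $\mc C'$ is nonnegative and convex and the padding is polynomial, this establishes \NP-hardness.

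I expect the crux of the argument to be precisely the non-triviality issue flagged above: arranging a legitimate $p\in\ch(\mc X)$ that behaves like the origin without letting the nonnegative costs trivialize the regret. The decoy item resolves this by decoupling the offset contribution ($p_g = 1$ at cost $M$) from the combinatorial part, and the only quantitative check needed is that $M$ is large enough that $g$ never enters the optimal nominal solution, which follows from the boundedness of the costs.
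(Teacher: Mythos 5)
Your proof is correct, and it takes a genuinely different route from the paper's. The paper also builds on Lemma~\ref{lma:bilin}, but instead of appending an external decoy item it stays inside the graph: it takes a spanning tree $p$ of $G$, duplicates its $|V|-1$ edges to form a multigraph $G'$, lets $\wt{p}$ be the duplicate copy (a spanning tree of $G'$, hence a legitimate vertex of $\ch(\mc{X}')$), and imposes the same degree constraints \eqref{Cconstraints} on the enlarged edge set. The objective then splits as $\sum_{e \in E} c_e x_e + \sum_{e \in E' \setminus E} c_e (x_e - 1)$, and the paper argues that the value equals $-(|V|-1)$ if and only if $G$ has a Hamiltonian path, since attaining that value forces every duplicated edge to be blocked and unselected while the original edges replay the argument of Lemma~\ref{lma:bilin}. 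The trade-offs are real: the paper's construction keeps the nominal problem a genuine minimum spanning tree problem and the uncertainty set of the same degree-constraint form as the lemma, so the hardness attaches to a natural problem class, but it must re-verify the Hamiltonian-path equivalence on the augmented instance, where the adversary is only implicitly forced (by optimality) to put cost $1$ on the duplicates. Your decoy item makes that forcing explicit by freezing $c_g \equiv M$ in $\mc{C}'$, which yields the clean identity $\overline{R}_{\max}(p) = M - \min_{c \in \mc{C}} F^*(c)$ and a modular, black-box reduction to the bilinear value; the price is a slightly artificial nominal problem (spanning trees union a singleton $\{g\}$) and an uncertainty set with a pinned coordinate, both of which are nonetheless perfectly admissible under the theorem's hypotheses, since $\min\{F^*(c), c_g\}$ is still polynomial solvable and $\mc{C} \times \{M\}$ is still nonnegative and convex. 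One further point worth noting: like the paper's $\wt{p}$, your $p$ is itself a vertex of the hull, so both arguments in fact establish hardness already for the deterministic maximum regret problem, consistent with the paper's remark that it is a special case of the expected version.
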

\begin{proof}
We modify the reduction used for Lemma \ref{lma:bilin} to account for the presence of some $p \in \ch(\mc{X})$ in the objective function of the maximum expected regret problem, which is now
\bea
\min_{c \in \mc{C}} \min_{x \in \mc{X}} \sum_{e \in E} c_e ( x_e - p_e).
\eea 
Again let $\mc{X}$ indicate the set of all feasible spanning trees for the directed graph $G = (V,E)$ and let $p \in \mc{X}$ be a valid spanning tree. We construct a new graph over the same set of vertices by taking $G$ and duplicating $|V|-1$ edges. For each edge given by the spanning tree $p$, we choose a corresponding edge in $G$ (note that there may be more than one option if both edges $(v_i, v_j)$ and $(v_j, v_i)$ are present, for example) and duplicate it. Let this new graph be denoted by $G' = (V, E')$, and let the set of all spanning trees over the new graph be indicated by $\mc{X}'$. Let $\wt{p} \in \mc{X}'$ indicate the set of edges that were constructed via duplication (i.e. the edges $E' \setminus E$), which is a valid spanning tree for $G'$. We finally construct the set $\mc{C}'$ using the inequalities in \eqref{Cconstraints} but over $E'$ instead of $E$.

Now consider the modified maximum expected regret problem
\bea
\min_{c \in \mc{C}'} \min_{x \in \mc{X}'} \sum_{e \in E'} c_e ( x_e - \wt{p}_e) = \min_{c \in \mc{C}'} \min_{x \in \mc{X}'}  \left ( \sum_{e \in E} c_e x_e + \sum_{e \in E' \setminus E} c_e (x_e-1) \right).
\eea
It can be seen that the modified problem has an objective value equal to $-(|V|-1)$ if and only if $G$ has a Hamiltonian path. This corresponds to the first sum in objective function being equal to zero, and the second sum being equal to $-(|V|-1)$. As before, for an optimal solution $x^* = (x_1^*, \ldots, x_{n+|V|-1}^*)$ to the modified problem, the set $\{e | x_e^* = 0, e \in E'\}$ gives a Hamiltonian path that is valid for both $G'$ and $G$. Notice that all of the duplicated edges $e \in E' \setminus E$ must be blocked ($c_e = 1$) and not selected by the minimum spanning tree ($x_e = 0$) for the objective value to be equal to $-(|V|-1)$. To finish the proof, we observe that the construction of $\mc{C'}$ and $G'$ can be accomplished in polynomial time.
\end{proof}

\section{Conclusion}
\label{conclusion}
We have shown that for both the interval and discrete scenario representations of uncertainty, the randomized minmax regret version of any polynomial solvable combinatorial problem is polynomial solvable. Furthermore, the maximum expected regret in the randomized model is upper bounded by the maximum regret of the deterministic model. These results, including the fact that there always exists a polynomial-sized optimal solution for randomized minmax regret, are at first glance somewhat surprising. Intuitively, the polynomial solvability of the randomized model results from the fact that a linear program must be solved instead of the integer program or mixed integer program (which is required for the deterministic model). The improvement in performance holds because the adversary has less power in the randomized model than the deterministic model.

For many applications that are not adversarial in nature, the randomized minmax regret criteria is likely a more appropriate model than the deterministic version. In particular, the deterministic solution may be overly conservative since costs are not truly chosen in an adversarial fashion in response to the selected solution. On the other hand, one must be willing to tolerate higher variance if randomization is used.

Our results on lower bounds for randomized minmax regret in relation to deterministic minmax regret, specifically Corollary \ref{igk} and \ref{ig2}, have important implications for approximating deterministic minmax regret problems. In Kasperski \cite{kasperski08}, it is posed as an open problem whether or not there exist approximation algorithms for interval uncertainty that, for some specific nominal problems, achieve an approximation ratio better than $2$. In some sense, we have answered this question in the negative. Corollary \ref{ig2} indicates that the integrality gap for the minmax regret problem is equal to $2$, and it is easy to create instances of nearly all nominal problems that achieve this gap. The same can be argued for the integrality gap of $k$ under discrete scenario uncertainty. Integrality gaps bound the best possible performance that can be obtained from approximation algorithms based on linear programming relaxations, which includes many approximation techniques \cite{williamson11}. Nonetheless, an important shortcoming of our primal-dual algorithms is that they do not use optimal dual solutions. It may be possible to design algorithms that perform better on specific problem instances by using optimal dual solutions. Under discrete scenario uncertainty, for instance, solving the nominal problem with costs averaged by dual variable weights, rather than averaged uniformly, may give improved performance.

An important future step with randomized minmax regret research is to develop approximation algorithms (now in the randomized model) for dealing with nominal problems that are already \NP-complete. This problem is non-trivial: an algorithm with an approximation factor $\alpha$ for a nominal problem does not immediately yield an algorithm to approximate the randomized minmax regret problem with a factor $\alpha$. Another interesting topic to study from an experimental perspective is a hybrid approach that employs both deterministic and randomized minmax regret. For example, one could find a solution that minimizes the maximum expected regret, subject to the maximum regret being no greater than some constant. The algorithm for this problem can be easily constructed by combining our results with existing work, but may no longer be polynomial solvable.

\bibliographystyle{plainnat}
\bibliography{rand_mmr_v2}

\end{document}